\documentclass[10pt, conference, compsocconf]{IEEEtran}
%\usepackage{fullpage}
%\textwidth 6.5in
%\textheight 9in
%\baselineskip 13.2pt

%\linespread{1.0}
\usepackage[cmex10]{amsmath}

\interdisplaylinepenalty=2500
\usepackage{amssymb,amsfonts,textcomp,amsthm}
\usepackage{array}
\usepackage{delarray}
\usepackage{latexsym}

\usepackage[pdftex]{color, graphicx}
\usepackage{subfigure}
\usepackage{mathptmx}
\usepackage{times}
\usepackage{paralist}
\usepackage{tabularx}
\usepackage{multirow}
\usepackage{graphicx}
\usepackage{url}
\usepackage{rotating}

\usepackage[lined, ruled, commentsnumbered]{algorithm2e}
\usepackage[font=footnotesize,labelfont=bf]{caption}

\hyphenation{Da-ta-Ca-ta-log com-pu-ta-tio-na-lly}

\DeclareMathOperator \dir {dir}

%\usepackage[noend]{algpseudocode}

% remove after compressing the paper
% \raggedbottom

%\addtolength{\textfloatsep}{-5mm}

%\usepackage[noend,ruled]{algorithm2e}

\usepackage{cite}

\newtheorem{proposition}{Proposition}
\newtheorem{theorem}{Theorem}

\newtheorem{lemma}{Lemma}

\newenvironment{proof-sketch}{\noindent \textit{Sketch of Proof:}}{$\Box$}

%%%%%%%%%%%%%%%%%%%%%%%%%%%%%%%%%%%%%%%%%%%%%%%%%%%%%%%%%%%%%%%%%%%%%%%%%

% Page numbers are useful for submission, but need to be removed during 
% preparation of the camear-ready version. Remove the following line to 
% get rid of page numbers:
\pagestyle{plain}

%%%%%%%%%%%%%%%%%%%%%%%%%%%%%%%%%%%%%%%%%%%%%%%%%%%%%%%%%%%%%%%%%%%%%%%%%

%\title{Avoiding servers congestion and high network delays by smart data placement among remote locations}
%\title{Avoid servers congestion by network-delay aware load balancing}
\title{Network delay-aware load balancing in selfish and cooperative distributed systems}

\author{
\IEEEauthorblockN{Piotr
  Skowron
}
\IEEEauthorblockA{Faculty of Mathematics, Informatics and Mechanics\\
  University of Warsaw \\
  Email: p.skowron@mimuw.edu.pl
}
\and
\IEEEauthorblockN{Krzysztof
  Rzadca
}
\IEEEauthorblockA{Faculty of Mathematics, Informatics and Mechanics\\
  University of Warsaw \\
  Email: krzadca@mimuw.edu.pl
}

}

\begin{document}
\maketitle

\begin{abstract}
  % limit: 300 words
  We consider a request processing system composed of organizations and their servers connected by the Internet.
  The latency a user observes is a sum of communication delays and the time needed to handle the request on a server. The handling time depends on the server congestion, i.e. the total number of requests a server must handle. We analyze the problem of balancing the load in a network of servers in order to minimize the total observed latency. We consider both cooperative and selfish organizations (each organization aiming to minimize the latency of the locally-produced requests). The problem can be generalized to the task scheduling in a distributed cloud; or to content delivery in an organizationally-distributed CDNs.

  In a cooperative network, we show that the problem is polynomially solvable. We also present a distributed algorithm iteratively balancing the load. We show how to estimate the distance between the current solution and the optimum based on the amount of load exchanged by the algorithm. During the experimental evaluation, we show that the distributed algorithm is efficient, therefore it can be used in networks with dynamically changing loads.

  In a network of selfish organizations, we prove that the price of anarchy (the worst-case loss of performance due to selfishness) is low when the network is homogeneous and the servers are loaded (the request handling time is high compared to the communication delay). After relaxing these assumptions, we assess the loss of performance caused by the selfishness experimentally, showing that it remains low.

  Our results indicate that a network of servers handling requests can be efficiently managed by a distributed algorithm. Additionally, even if the network is organizationally distributed, with individual organizations optimizing performance of their requests, the network remains efficient.

\end{abstract}

\vspace{-7pt}
\section{Introduction}\label{sec::introduction}

One of the most important aspects affecting the perceived quality of a web service is the delay in accessing the content. To avoid servers' congestion, the content of the web pages is commonly replicated in multiple locations. Additionally, in order to minimize the network latency, the replicas are placed close to users. Because the intensity of the web traffic changes dynamically, efficient mirroring requires both expensive infrastructure and effective load balancing algorithms. As the result many organizations decide to handle the task of mirroring their data to dedicated platforms --- content delivery networks (CDNs)~\cite{pallis2006content, freedman2010experiences}. The CDNs have been very successful in the recent years: Akamai~\cite{Nygren:2010:ANP:1842733.1842736, akamai2, draftingBehindAkamai}, the largest CDN, handles around 15-20\% of the Internet traffic.

Consider an apparently different distributed system: a cloud of datacenters performing com\-pu\-ta\-tio\-na\-lly-intensive parallel calculations. Each datacenter attempts to accelerate its calculations by distributing some of its load to less loaded and faster datacenters. However, the datacenters in remote locations must be avoided as the time needed to transfer the input and the result may dominate the processing time.

Routing the requests in a CDN and distributing the load in a cloud are strongly related problems. In both there are systems of servers connected by the network (for simplicity, in cloud, we refer to a single datacenter as a ``server'', as we will not explore the parallelism inside a datacenter). The handling time on a server depends both on its performance metrics and its load. The final perceived latency comes from the network delays (required for transmitting the input data and the result) and from the handling time on the servers. Finally, in both cases every server has its initial load: in a CDN, the load is the current number of the data access requests to the server; in a cloud, the load is the number of initial tasks. We generalize these two problems to a load balancing of remote services.

We assume that in the balanced system, the handling time of a single request on a server linearly depends on the total number of requests to be processed by the server. A linear dependency reflects a constant throughput of a server. In real systems, increasing the level of concurrency too much may overload the server decreasing its throughput (trashing). However, assuming that the amount of work in the system as a whole is reasonable, there should be no overloaded servers in the balanced state. Similar assumptions are usually taken in congestion games~\cite{finiteCongestionGames, linearCongestionGames}
% stating that the latency on the particular resource linearly depends on the current load on this resource; 
and in the queuing theory, where a linear dependency is expressed by Little's law.
% \cite{queuingTheory}
%Little's law indicates a linear dependency between the processing time and the number of requests in the system.

We assume that the transmission duration of a single request does not depend on the number of sent requests. Although some models (e.g., routing games \cite{routingGames}) consider the cases when the bandwidth of a link may become a bottleneck, we focus on a widespread network, in which there are multiple cost-comparable routing paths between the servers. Thus, sending any data from one server to another should not significantly increase the network delay between them. These assumptions are also justified by our experiments -- in Appendix we discuss how the intensity of the network load, generated between the servers, influences the RTT between the servers in PlanetLab environment. In other words, we consider that the load our system imposes on the network is negligible: thus, the network delay is caused only by the latency (resulting from e.g., geographical distribution). Our problem formulation assumes the knowledge of such latencies; this is not a limitation because monitoring the pairwise latencies, which can change in time, is a well studied problem with known solutions (e.g. see \cite{Szymaniak04scalablecooperative, Chan-TinH11}). Optimizing latency is important for instance when streaming video files: a large latency delays the start, and, in case of communication problems, can be perceived as breaks in transmission.

Balancing of the servers loads and finding the mirroring minimizing network delays are analyzed in the literature, but usually separately (see Section~\ref{sec:related-work}). Distributed systems should consider both communication and computation.
On one hand, clouds get geographically distributed, thus cannot ignore network latency. On the other, a CDN handling complex, dynamically-created content of the modern web, can no longer ignore the load imposed by the requests. 
%As for delivering large static content (e.g. video) the server load may easily become a bottleneck.
%We believe that the content delivery networks developed their own load balancing techniques but none of them are revealed to the public knowledge. 

For delivering large static content, like multimedia, some currently used techniques cache the content at specially designated front-end servers. A particular server is chosen by the round-robin algorithm. This approach which is inefficient as, for instance, unpopular files are cached in multiple places. Benefits from optimizing requests redirections can be significant. For instance,~\cite{Chawla:2011:SCS:2002181.2002214} proposes a caching scheme that is both consistent (requests for the same content are redirected to the same front-end servers) and proportional (each server handles a desired proportion of requests). In this case, our algorithms can be viewed as a complementary optimization technique to caching -- once the content must be downloaded from the back-end servers, we show how to efficiently distribute the download requests.

In cloud computing, our model fits for instance processing streams of data in the real time or when the data stream is continuously produced and too large to be processed off-line. Consider a user interacting with a simulated virtual environment (e.g.~\cite{allard2007sofa}): user's actions are captured by cameras; their image streams are analyzed in the real time to build a 3D model; then this model interacts with the virtual world model. Other applications include extracting statistics on users' actions in the Internet; or image analysis.

In addition to a classic system with a central management, we analyze an organizationally-distributed system. Instead of a single, system-wide goal (minimize the overall request handling time), the organizations are selfishly interested only in optimizing the handling time of their local requests. This model reflects a CDN created as an agreement between e.g., many ISPs; or a federation of clouds, each having a different owner. Because typically the load changes dynamically, with peaks of demand followed by long periods of low activity, individual organizations are motivated to enter such a system: a peak can be offloaded, whereas handling foreign requests in the period of low activity is relatively inexpensive.
%when there is no single coordination entity. The users are gathering in non-coordinated groups and use their joint infrastructure. However, each of them is selfish and wants to optimize the performance of its own requests. 

The lack of central coordination in the organiza\-tionally-distributed system increases the average processing time. The price of anarchy~\cite{worstCaseEqulibria} expresses the worst-case relative increase in the latency in comparison with relinquishing the control to a centrally-managed organization (like Akamai's CDN). As the price of anarchy varies considerably between systems (from relatively small in congestion games to unbounded in selfish replication~\cite{Rzadca2010ReplicaPlacementin}), we were curious to check it in our system.

Our contribution is the following:
\begin{inparaenum}[(i)] 
\item We show that the problem of network delay-aware load balancing can be stated as an optimization problem in the continuous domain; the problem is polynomially solvable, although standard solvers have $O(m^6)$ complexity\footnote{We do not claim that no better centralized algorithm exists; however, due to distributed nature of the problem, we are more interested in proposing a distributed algorithm, rather than tuning a centralized algorithm.}.
\item We propose a distributed algorithm that iteratively balances the servers' load towards the optimum. We confirm the algorithm's efficiency through simulation: even on a single CPU it outperforms the standard solvers.
\item In a network of selfish servers, we prove that the price of anarchy is low ($1 + O(2cs/l_{av})$) if the communication delay between each pair of servers is the same and the request handling time on a server is significantly higher then the network delay. The experiments show that the loss of performance caused by the selfishness remains low (below 1.15) also without these assumptions.
\end{inparaenum}

\vspace{-5pt}

\section{Mathematical model}\label{sec:mathematical-model}

%\subsection{Notation}\label{sec:notation}

%We assume that the latencies satisfy the triangle inequality ($\forall_{i, j, k} \; c_{ik} \leq c_{ij} + c_{jk}$). This condition expresses the optimality of the paths found by the routing algorithm: if $c_{ik} > c_{ij} + c_{jk}$ the packets from $i$ to $k$ should be routed through $j$.
% triangle inequality not needed; single hop is sufficient

\noindent
\textbf{Organizations, servers, tasks}\quad
The system consists of a set of $m$ \emph{organizations}, each owning a \emph{server} (or a \emph{cluster}) connected to the Internet. The servers are uniform; each server $i$ has a constant processing speed $s_i$. The $i$-th organization has its \emph{own load} consisting of a large number $n_{i}$ of small, individual \emph{tasks} (or \emph{requests}). 
The amount of load $n_i$ can be considered as a number of tasks at a particular time moment (snapshot); or, alternatively, as a steady state rate of incoming requests in a system continuously processing requests. 
%; in such system, $n_{i}$ would model the rate of requests produced by $i$-th organization.
A task corresponds to, e.g., in a computational cloud, a unit-size computation (e.g.: a single work unit in a BOINC-type application; or a single invocation of a map-reduce function); or, in a CDN, a request for remote data coming from a user assigned to server $i$ (typically, a user would be assigned to the closest server). 
In the basic model we assume that the small tasks have the same sizes (e.g. this corresponds to the divisible computation load; or in a CDN to the case where the stored data chunks have constant sizes); thus the execution of the single request on the $i$-th server takes $1/s_{i}$ time units. In Section \ref{sec:replica-placement} we show how to easily extend our results to the tasks of different sizes. \medskip

\noindent
\textbf{Relaying tasks, communication delays}\quad
Each organization can relay some of its own requests to other servers. If the request is relayed, the observed handling time is increased by the communication latency on the link.
We denote the communication latency between $i$-th and $j$-th server as $c_{ij}$ (with $c_{ii}=0$). 
Since communication delay of a single request does not depend on the amount of exchanged load (which is explained in Section \ref{sec::introduction} and which is confirmed by our experiments on PlanetLab -- see Appendix) $c_{ij}$ is just a constant instead of a function of the network load. We assume that the routing in the system is correct (optimized by the network layer). Thus, we will not consider optimizing communication time by relaying requests from $i$ to $j$ through a third server $k$ (if $c_{ik} + c_{kj} < c_{ij}$, the network layer would also discover the route $i \to k \to j$ and update the routing table accordingly, so that $c_{ij} := c_{ik} + c_{kj}$).
We assume that each request can be sent to and executed on any server. However, if we set some of the communication delays to infinity, we restrict the basic model to the case when each organization is allowed to relay its requests only to the given subset of the servers (its neighbors) which models e.g. the trust relationship. \medskip

\noindent
\textbf{Relay fractions, current loads}\quad
We use a fractional model in which a \emph{relay fraction} $\rho_{ij}$ denotes the fraction of the $i$-th organization's own requests that are sent (relayed) to be executed at $j$-th server ($\forall_{i,j} \: \rho_{ij} \geq 0$ and $\forall _{i} \: \sum_{j = 1}^{j = m}\rho_{ij} = 1$). The load balancing problem is to find the appropriate values of the relay fractions (formalized in the further part of this Section). Once the fractions are known, each organization knows how many of its own requests it should send to each server; the tasks are sent and executed at the appropriate servers. The fractional model might be considered as a relaxation of a problem of handling non-divisible requests; in Section~\ref{sec:replica-placement} we show how to round the solution of a fractional model to a discrete model. Moreover, the fractional model itself fits the divisible load model used in the scheduling theory.
For the sake of the clarity of the presentation we use the additional notation for the number of requests redirected from server $i$ to $j$ -- $r_{ij}$ (thus $r_{ij} = n_{i} \rho_{ij}$), and for the \emph{current load} of the server $i$, i.e. the number of requests relayed to $i$ by all other organizations, including the organization owning the server itself -- $l_i$ (thus, $l_i = \sum_{j = 1}^{m}r_{ji}$).
\medskip

\noindent
\textbf{Completion times}\quad
We don't assume any particular order of requests executed on a server. First, since the number of requests is large, considering any particular order on the servers would increase the computational complexity. Second, in a continuously running systems, we have no control over the order in which requests are produced (especially as they can be also delayed by the network); the usual FIFO policy results in an arbitrary order. 
%Third, as requests have the same size, on a server any particular execution order would favors an organization (compared to others)---and our plan is to balance the 
Thus, for each of the $l_j$ request that are actually processed on $j$-th server, the expected processing time of each request is equal to $1/l_j \sum_1^{l_j} i/s_i = l_j/2s_{j}$ (constant omitted for clarity). Since $i$-th organization relayed $r_{ij}$ requests to $j$, the expected total completion time of requests relayed by $i$ to $j$ is equal to $r_{ij}(l_j/2s_{j} + c_{ij})$.
The expected total processing time $C_i$ of the $i$-th organization's own requests is a sum over all servers $j$ of the expected total completion times of the requests owned by $i$ and relayed to $j$.
\begin{equation}
C_{i} = \sum_{j = 1}^{m}\left(\frac{l_j}{2s_{j}} + c_{ij}\right)r_{ij} = \sum_{j = 1}^{m}\left(\left(c_{ij} + \sum_{k = 1}^{m}\frac{\rho_{kj} n_k}{2s_{i}}\right) \rho_{ij} n_{i}\right) \textrm{.}
\end{equation}

We consider the expected (or the average) processing time, rather than the makespan of an organization for several reasons. The average processing time is similar to the widely-used sum of processing times criterion ($\Sigma C_i$). We assume that the workload of each organization is created by many users. $\Sigma C_i$ models users' performance better than the makespan~\cite{dutot2004bi}. In all the contexts motivating our work from Section \ref{sec::introduction} (e.g., processing streams of data in the real time, delivering content to the users) we are focused on the average user performance. Also, while $C_{i}$ depends on the vector $\rho = [\rho_{kl}]$ quadratically, the relation between the makespan and $\rho$ is just linear, which makes the problem considerably easier. Thus, we believe that some of our results could be adapted for the cases when some different from the pointed applications of our model would require optimizing the makespan.

The total processing time of all the requests in the system is denoted as $\sum{C_{i}} = \sum_{i=1}^{i=m} C_{i}$.
\medskip

\noindent
\textbf{Problem formulation}\quad
We consider two related problems. First, the goal is to find such a vector of the fractions, $\rho$, that the total processing time of the requests, $\sum{C_{i}}$, is minimized. This goal corresponds to a centrally-managed system having a unique owner and a single goal.

Second, we analyze the case when servers are a common good, but each organization is selfishly minimizing the processing time of its own requests.
The $i$-th organization is responsible for sending its own requests to appropriate servers. In other words, the $i$-th organization adjusts the values of $\rho_{ij}$ in order to minimize  $C_{i}$.
This approach is similar to the selfish job model \cite{vocking-selfishloadbalancing}, in which jobs selfishly choose processors to minimize their execution time. Similar agreements exist in real-life systems: e.g., PlanetLab servers are treated as a common good managed by a central entity; PlanetLab users choose the servers they want to use for their experiments. Also, in academic grids (e.g. Grid5000 in France), participating organizations grant control over their resources to a central entity; in return, users can submit their jobs to any resource.
In this case,  we look for such a vector of the fractions $\rho$ for which the system reaches the Nash equilibrium. By comparing the resulting $\sum C_{i}$ with the result for the centrally-managed system, we will find the price of anarchy, quantifying the effect of selfishness on the total processing time.

\vspace{-5pt}
\section{Optimal solution}\label{sec:optimal-solution}

In this section we assume that there is a central processing unit that has the complete knowledge about the whole system. Given the communication latencies $c_{ij}$ and the organizations' own loads $n_{i}$, our goal is to find an algorithm setting relay fractions $\rho_{ij}$ so that the total processing time of all the requests $\sum{C_i}$ is minimized. We express the problem as a quadratic programming problem. We show that the problem is polynomially-solvable.

We express the total processing time $\sum{C_{i}}$ in a matrix form as 
%% \begin{equation}
$\sum{C_{i}} = \rho^{T} Q \rho + b^{T} \rho \text{,}$
%% \end{equation}
where:
\begin{list}{$\bullet$}{\setlength{\leftmargin}{8pt} \setlength{\labelwidth}{0pt}
	\setlength{\itemsep}{0pt}\setlength{\parsep}{1pt}\setlength{\topsep}{1pt}\setlength{\partopsep}{1pt}}
\item $\rho$ is a vector of \emph{relay} fractions with $m \cdot m$ elements. $\rho_{(i,j)}$, the element at $(i \cdot m + j)$-th position, denotes the fraction of local requests of $i$-th server that are relayed to $j$-th server $\rho_{ij}$, thus: \\
  %% \begin{equation}
    $\rho = [\rho_{(1,1)}, \rho_{(1,2)}, \ldots, \rho_{(1,m)}, \rho_{(2,1)}, \ldots, \rho_{(m,m)}]^{T} \textrm{;}$
  %% \end{equation}
\item $Q$ is $m^{2}$-by-$m^{2}$ matrix in which $q_{(i,j),(k,l)}$ denotes the element in $(i \cdot m + j)$-th row and in $(k \cdot m + l)$-th column:
  \begin{equation} \label{eq:q-def}
    q_{(i,j),(k,l)} = 
    \begin{cases}
      n_{i} n_{k}/s_{j} & \text{if $j = l$ and $i < k$;}\\
	  n_{i} n_{k}/2s_{j} & \text{if $j = l$ and $i = k$;}\\
      0 &  \text{otherwise;}\\
    \end{cases}
  \end{equation}
  Figure \ref{fig:matrixQ} presents the structure of matrix Q.
\item $b$ is a vector with $m^2$ elements with $b_{ij}$ denoting an element at $(i \cdot m + j)$-th position:
  %% \begin{equation}
    $b_{(i,j)} = c_{ij} n_i \text{.}$
  %% \end{equation}

\end{list}

% $Q$ is a lower triangular matrix. 

% The elements on the diagonal ($n_{i} n_{k} a_{j}$) correspond to the factor by which $\rho_{(i,i}$ are multiplied.

The following derivation shows how the matrix $Q$ is constructed:
\begin{align}
  \rho^{T} Q \rho & = \sum_{i,j}\rho_{(i,j)}\sum_{k \geq i}q_{(i,j), (k,j)}\rho_{(k,j)} \label{eq:q-der-1} \\
  & =
  \sum_{i,j}\rho_{(i,j)}(\sum_{k > i}\frac{n_{i}n_{k}\rho_{(k,j)}}{s_{j}} + \frac{n_{i}^{2}\rho_{(i,j)})}{2s_{j}} \label{eq:q-der-2} \\
  & =
  \sum_{i}\sum_{j}\sum_{k}\frac{n_{i}n_{k}\rho_{(i,j)}\rho_{(k,j)}}{2s_{j}} 
  \label{eq:q-der-3} =
  \sum_{i}\sum_{j}\frac{r_{ij}l_{j}}{2s_{j}} \textrm{.}
\end{align}
(\ref{eq:q-der-1}) follows from the construction of the matrix $Q$ (only elements $k \geq i$ are non-zero). (\ref{eq:q-der-2}) substitutes $q_{(i,j),(k,l)}$ with the values defined in (\ref{eq:q-def}). (\ref{eq:q-der-3}) uses commutativity of multiplication and substitutes $l_j = \sum_k n_k \rho_{(k,j)}$ and $r_{ij} = n_{i}\rho_{(i,j)}$.

\begin{figure}[t]
\setlength{\unitlength}{0.5mm}
\begin{picture}(75, 85) (0, 10)
\linethickness{0.05mm}
\put(10, 10){\line(0, 1){70}}
\put(90, 10){\line(0, 1){70}}

\put(10, 10){\line(1, 0){2}}
\put(10, 80){\line(1, 0){2}}
\put(90, 10){\line(-1, 0){2}}
\put(90, 80){\line(-1, 0){2}}

\put(13, 75){\tiny X}
\put(20, 68.7){\tiny X}
\put(27, 62.4){\tiny X}
\put(34, 56.1){\tiny X}
\put(41, 49.8){\tiny X}
\put(48, 43.5){\tiny X}
\put(55, 36.2){\tiny X}
\put(62, 29.9){\tiny X}
\put(69, 23.6){\tiny X}
\put(76, 18.3){\tiny X}
\put(83, 12){\tiny X}

\put(33, 65.6){\normalsize 0}
\put(47, 53){\normalsize 0}
\put(61, 40.7){\normalsize 0}
\put(75, 28.1){\normalsize 0}

\put(34, 75){\tiny X}
\put(41, 68.7){\tiny X}
\put(48, 62.4){\tiny X}
\put(55, 56.1){\tiny X}
\put(62, 49.8){\tiny X}
\put(69, 43.5){\tiny X}
\put(76, 36.2){\tiny X}
\put(83, 29.9){\tiny X}

\put(54, 65.6){\normalsize 0}
\put(68, 53){\normalsize 0}
\put(82, 40.7){\normalsize 0}

\put(55, 75){\tiny X}
\put(62, 68.7){\tiny X}
\put(69, 62.4){\tiny X}
\put(76, 56.1){\tiny X}
\put(83, 49.8){\tiny X}

\put(61, 76){\Large .}
\put(65, 76){\Large .}
\put(69, 76){\Large .}

\put(83, 57){\Large .}
\put(83, 61){\Large .}
\put(83, 65){\Large .}

\put(32, 25){\LARGE 0}

\put(13, 80){\vector(1, 0){20}}
\put(33, 80){\vector(-1, 0){20}}
\put(34, 80){\vector(1, 0){21}}
\put(55, 80){\vector(-1, 0){21}}

\put(92, 11){\vector(0, 1){18}}
\put(92, 29){\vector(0, -1){18}}
\put(92, 30){\vector(0, 1){19}}
\put(92, 49){\vector(0, -1){19}}

\put(21, 82){\scriptsize m}
\put(43, 82){\scriptsize m}
\put(94, 22){\scriptsize \begin{rotate}{270}m\end{rotate}}
\put(94, 41){\scriptsize \begin{rotate}{270}m\end{rotate}}

\linethickness{0.01mm}
\put(110, 49){\vector(-3, -1){31}}
\put(112, 49){\normalsize $n_{i}n_{k}/s_{j}$}
\put(110, 64){\vector(-3, -1){59}}
\put(112, 64){\normalsize $n_{i}n_{k}/2s_{j}$}

\end{picture}

\caption{Matrix Q: X denotes non-zero values}
\label{fig:matrixQ}
\vspace{-10pt}
\end{figure}

The constraints that $\rho_{ij}$ are the fractions ($\forall_{i,j} \: \rho_{ij} \geq 0$ and $\forall _{i} \: \sum_{j = 1}^{j = m}\rho_{ij} = 1$) can also be expressed in the matrix form. 
First, $\rho \geq 0_{m^2}$, where $0_{m^2}$ is a vector of length $m^{2}$ consisting of zeros.
Second, $A\rho = 1_{m}$, where $1_{m}$ is a vector of length $m$ and consisting of ones, and $A$ is a $m$-by-$m^2$ matrix defined by the following equation:
\begin{equation}
  a_{ij} = 
  \begin{cases}
    1 & \text{if  $i m \leq j < (i+1) m$} \\
    0 & \text{otherwise.}\\
  \end{cases}
\end{equation}

Minimization of $\sum{C_{i}}(\rho) = \rho^{T} Q \rho + b^{T} \rho$ with constraints $\rho \geq 0_{m^2}$ and $A\rho = 1_{m}$ is an instance of quadratic programing problem.
As an upper triangular matrix, matrix $Q$ has $m^{2}$ eigenvalues equal to the values at the diagonal: $n_{i}^{2}/2s_{j}$ ($1 \leq i,j \leq m$). All eigenvalues are positive so $Q$ is positive-definite. 
%$\sum{C_{i}}(\rho)$ is a convex function of $\rho$. The constraints on the solution also define a convex area.
Thus,  the problem can be solved by the ellipsoid method in polynomial time \cite{quadraticSolvability}. According to \cite{convexComplexity}, the best running time reported for solving quadratic programing problem
with linear constraints is $O(n^3L)$ \cite{bestConvex}, where $L$ represents the total length of the input coefficients and $n$ the number of variables (here $n = m^2$), so the complexity of the best solution is $O(Lm^6)$.

\vspace{-7pt}
\section{Distributed algorithm}\label{sec:distr-algor}

\SetAlFnt{\footnotesize}
\SetAlCapFnt{\footnotesize}
\begin{algorithm}[t!]
  \SetKwInOut{Input}{input}
  \Input{$(i, j)$ -- the identifiers of the two servers}
  \KwData{$\forall_{k}$ $r_{ki}$ -- initialized to the number of requests owned by $k$ and relayed to $i$ ($\forall_{k}$ $r_{kj}$ is defined analogously)}
  \KwResult{The new values of $r_{ki}$ and $r_{kj}$}
  \ForEach{$k$}{
       $r_{ki} \leftarrow r_{ki} + r_{kj}$; $r_{kj} \leftarrow 0$\; 
  }
  $l_{i} \leftarrow \sum_{k} r_{ki}$ ; $l_{j} \leftarrow 0$ \;

  $servers$ $\leftarrow$ sort $[k]$ so that $c_{kj} - c_{ki} < c_{k'j} - c_{k'i}$ $\implies$ $k$ is before $k'$\;
  \ForEach{$k \in servers$}
  {
      $\Delta r_{ikj} \leftarrow \min\left(\frac{(s_{j}l_{i} - s_{i}l_{j}) - s_{i}s_{j}(c_{kj} - c_{ki})}{(s_{i} + s_{j})}, r_{ki}\right)$ \;
      \If{$\Delta r_{ikj} > 0$}
      {
         $r_{ki} \leftarrow r_{ki} - \Delta r_{ikj}$; $r_{kj} \leftarrow r_{kj} + \Delta r_{ikj}$ \;
         $l_{i} \leftarrow l_{i} - \Delta r_{ikj}$; $l_{j} \leftarrow l_{j} + \Delta r_{ikj}$ \;
      }
  }
  \Return{for each $k$: $r_{ki}$ and $r_{kj}$}
  \caption{\footnotesize \textsc{calcBestTransfer}(i, j)}
  \label{alg:exchangingLoads}
\end{algorithm}

\begin{algorithm}[t!]
  \SetKwFunction{calcBestTransfer}{calcBestTransfer}
  \SetKwFunction{transfer}{rely}
  \SetKwFunction{impr}{impr}
  \SetKwInOut{Notation}{Notation}
  \Notation{\impr(i, j) $\leftarrow$ this function calculates the improvement of $\sum C_{i}$ when transferring requests between $i$ and $j$. The number of requests that should be transferred can be computed by \calcBestTransfer{i, j}.}
  partner $\leftarrow \mathrm{argmax}_{j}(\impr(id, j))$\;
  \transfer(id, partner, \calcBestTransfer{id, partner})\;
  \caption{\footnotesize Min-Error (MinE) algorithm performed by server id.}
  \label{alg:distributedOptimal}
\end{algorithm}

The centralized algorithm requires the information about the whole network
 -- the size of the input data is $O(m^{2})$ and the $Q$ matrix has $O(m^{3})$ non-zero entries. A centralized algorithm has thus the following drawbacks:
(i) collecting information about the whole network is time-consuming; moreover, loads and latencies may frequently change;
(ii) a standard solver takes significant time (recall $O(Lm^6)$ in Section \ref{sec:optimal-solution});
(iii) the central algorithm is more vulnerable to failures. 
Motivated by these limitations we introduce a distributed algorithm for finding the optimal solution. 

The distributed algorithm requires that each server has up-to-date information about the loads on the other servers and about the communication delays from itself to the other servers (and not for all pairs of servers). Thus, for each server, the size of the input data is $O(m)$. As indicated in Section~\ref{sec::introduction}, the problem of monitoring the latencies is well-studied. The loads can be disseminated by a gossiping algorithm. As gossiping algorithms have logarithmic convergence time, if the gossiping is executed about $O(log(m))$ times more frequently than our algorithm, each server has accurate information about the loads. 

Each organization, $i$, keeps for each server, $k$, the information about the number of requests that were relied to $i$ by $k$. The algorithm iteratively improves the solution -- the $i$-th server in each step communicates with the locally optimal partner server -- $j$ (Algorithm~\ref{alg:distributedOptimal}). The pair $(i,j)$ locally optimizes the current solution by adjusting, for each $k$, $r_{ki}$ and $r_{kj}$ (Algorithm~\ref{alg:exchangingLoads}). 
In the first loop of the Algorithm~\ref{alg:exchangingLoads}, $i$, one of the servers, takes all the requests that were previously assigned to $i$ and to $j$. Next, all the organizations $[k]$ are sorted according to the ascending order of $(c_{kj} - c_{ki})$. The lower the value of $(c_{kj} - c_{ki})$, the more profitable it is to run requests of $k$ on $j$ rather than on $i$ in terms of the network topology. Then, for each $k$, the loads are balanced between servers $i$ and $j$.

In Section~\ref{sec:optimal-solution} we have shown that the optimization problem is convex. Thus, it is natural to try local optimization techniques. The presented mechanism requires only two servers involved in each optimization step, thus it is very robust to failures. This mechanism is similar in spirit to the diffusive load balancing~\cite{conf/ipps/AdolphsB12, Ackermann:2009:DAQ:1583991.1584046, Berenbrink:2011:DSL:2133036.2133152}; however there are substantial differences related to the fact that the machines are geographically distributed: (i) In each step no real requests are transferred between the servers; this process can be viewed as a simulation run to calculate the relay fractions $\rho_{ij}$. Once the fractions are calculated the requests are transfered and executed at the appropriate server. (ii) Each pair $(i, j)$ of servers exchanges not only its own requests but the requests of all servers that relayed their requests either to $i$ or to $j$. Since different servers may have different communication delays to $i$ and $j$ the local balancing requires more care (Algorithms~\ref{alg:exchangingLoads}~and ~\ref{alg:distributedOptimal}).

$\vspace{-5pt}$
\subsection{Correctness}

The following Lemma shows how to optimally exchange the requests owned by organization $k$ between a pair of servers $i$ and $j$.

\begin{lemma}\label{lemma::delegation}
  Consider two servers $i$ and $j$ that execute $r_{ki}$ and $r_{kj}$ requests of the $k$-th organization. The total processing time, $\sum{C_{i}}$, is minimized when the $k$-th server relies $\Delta r_{ikj}$ from $r_{ki}$ requests to be additionally executed on $j$-th server:
  \begin{align*}
    \Delta r'_{ikj} =  \frac{(s_{j}l_{i} - s_{i}l_{j}) - s_{i}s_{j}(c_{kj} - c_{ki})}{(s_{i} + s_{j})} \\
    \Delta r_{ikj} = \max (0, \min ( r_{ki}, \Delta r'_{ikj} ) )
  \end{align*}
\end{lemma}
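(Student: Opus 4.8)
The plan is to observe that the lemma concerns a \emph{single} one-dimensional move --- relaying $x := \Delta r_{ikj}$ of organization $k$'s requests from server $i$ to server $j$ --- so I would reduce $\sum C_i$ to a one-variable function of $x$ and minimize it. The first step is to isolate which terms of $\sum C_i$ actually change. Summing (1) over all $i$ and using $\sum_{i} r_{ij} = l_j$ shows that $\sum C_i = \sum_{j} l_j^2/(2 s_j) + \sum_{i,j} c_{ij} r_{ij}$, so the congestion part of the objective depends on the current assignment only through the server loads $l_j$. Hence relaying $x$ requests of $k$ from $i$ to $j$ perturbs the objective only via $l_i \mapsto l_i - x$ and $l_j \mapsto l_j + x$ (all other loads are untouched), and via the two communication terms $c_{ki} r_{ki} + c_{kj} r_{kj} \mapsto c_{ki}(r_{ki} - x) + c_{kj}(r_{kj} + x)$.

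Next I would write the induced change $g(x) = \sum C_i(\mathrm{after}) - \sum C_i(\mathrm{before})$ explicitly and simplify:
\[ g(x) = \frac{(l_i - x)^2 - l_i^2}{2 s_i} + \frac{(l_j + x)^2 - l_j^2}{2 s_j} + (c_{kj} - c_{ki})\,x = \frac{s_i + s_j}{2 s_i s_j}\,x^2 + \Bigl(\tfrac{l_j}{s_j} - \tfrac{l_i}{s_i} + c_{kj} - c_{ki}\Bigr) x . \]
Since $s_i, s_j > 0$, the coefficient of $x^2$ is strictly positive, so $g$ is a strictly convex quadratic and its unconstrained minimizer is the unique root of $g'(x) = 0$. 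Solving $\tfrac{s_i+s_j}{s_i s_j} x + \tfrac{l_j}{s_j} - \tfrac{l_i}{s_i} + c_{kj} - c_{ki} = 0$ and clearing denominators gives exactly $x = \bigl[(s_j l_i - s_i l_j) - s_i s_j (c_{kj} - c_{ki})\bigr]/(s_i + s_j) = \Delta r'_{ikj}$.

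Finally I would handle the box constraint. The number of $k$'s requests that can be relayed out of $i$ ranges over the interval $[0, r_{ki}]$, and a convex function on the real line restricted to an interval is minimized at the projection of its unconstrained minimizer onto that interval; this projection is precisely $\Delta r_{ikj} = \max\bigl(0, \min(r_{ki}, \Delta r'_{ikj})\bigr)$. Since every other term of $\sum C_i$ is held fixed, performing this transfer is optimal, which is the claim of the lemma.

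The main point requiring genuine care is the first step --- arguing, through the identity $\sum_{i} r_{ij} = l_j$, that nothing outside the few listed terms changes when we move $k$'s load between $i$ and $j$. After that, Step 2 is a routine (if slightly fiddly) computation, where one must be careful with the signs of $c_{kj} - c_{ki}$ and with bringing $g'(x)=0$ into the precise algebraic form of $\Delta r'_{ikj}$, and Step 3 is the elementary fact about minimizing a convex function over an interval; neither presents a real obstacle.
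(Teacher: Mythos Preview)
Your proposal is correct and follows essentially the same route as the paper: isolate the one-variable quadratic in the transfer amount, differentiate, and then clip to the feasible interval $[0,r_{ki}]$. You are somewhat more explicit than the paper in justifying \emph{why} only the $l_i,l_j$ and $c_{ki},c_{kj}$ terms move (via the identity $\sum_i r_{ij}=l_j$) and in invoking convexity to handle the box constraint, but the argument is the same.
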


\begin{proof}
If the $k$-th server moves some of its requests from $i$ to $j$, then it affects the completion time of all requests that were relayed either to $i$ or to $j$ (initial requests of all servers). Recall that $l_{i}$ and $l_{j}$ are the loads of the servers, respectively, $i$ and $j$, that is they include all tasks relayed to, respectively, $i$ and $j$. Thus, if $k$ removes $\Delta r$ of its requests from $i$, then the new processing time of all tasks on the server $i$ will be $(l_{i} - \Delta r)^{2}/2s_{i}$. Thus, we want to find $\Delta r_{ikj}$ that minimizes the function $f$:

  \begin{align*}
    f(\Delta r) = \frac{(l_{i} - \Delta r)^{2}}{2s_{i}} + \frac{(l_{j} + \Delta r)^{2}}{2s_{j}} - \Delta r c_{ki} + \Delta r c_{kj}
  \end{align*}

  We can find minimum by calculating derivative:

  \begin{align*}
    \frac{d f}{d \Delta r} = \frac{\Delta r - l_{i}}{s_{i}} + \frac{\Delta r + l_{j}}{s_{j}} - c_{ki} + c_{kj} = 0 \\
    \Delta r_{ikj} = \frac{(s_{j}l_{i} - s_{i}l_{j}) - s_{i}s_{j}(c_{kj} - c_{ki})}{(s_{i} + s_{j})}
  \end{align*}

  Also $\Delta r \in \langle 0, r_{ki} \rangle$, which proves the thesis.
\end{proof}

The following lemma proves the correctness of Algorithm~\ref{alg:exchangingLoads}.

\begin{lemma}\label{lemma::delegationOptimality}
  After execution of Algorithm~\ref{alg:exchangingLoads} for the pair of servers $i$ and $j$, it is not possible to improve $\sum C_{i}$ only by exchanging any requests between $i$ and $j$.
\end{lemma}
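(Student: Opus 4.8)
The plan is to reduce the statement to the first–order optimality conditions of a convex program and then to read those conditions off the execution of Algorithm~\ref{alg:exchangingLoads}.

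\textbf{Reduction to a convex program.} Fix the configuration returned by the algorithm and consider only reallocations of requests between $i$ and $j$: for each organization $k$ let $m_k=r_{ki}+r_{kj}$ be the (fixed) number of its requests residing on $i$ or $j$, let $L=\sum_k m_k$, and take $x_k:=r_{kj}\in[0,m_k]$ as the free variables (so $r_{ki}=m_k-x_k$, $l_j=\sum_k x_k$, $l_i=L-l_j$). Writing $\sum C_i=\sum_p l_p^2/2s_p+\sum_{k,p}c_{kp}r_{kp}$ (which follows from~(\ref{eq:q-der-3}) and the definition of $C_i$), and noting that only the terms with $p\in\{i,j\}$ change, one obtains $\sum C_i=\mathrm{const}+l_i^2/2s_i+l_j^2/2s_j+\sum_k d_k x_k$ with $d_k:=c_{kj}-c_{ki}$. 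Since $l_i+l_j=L$ is fixed, $l_i^2/2s_i+l_j^2/2s_j$ is convex in $l_j$, so this objective $g$ is convex on the box $\prod_k[0,m_k]$; hence it suffices to verify the KKT conditions at the output. With $\tau^{*}:=l_i^{*}/s_i-l_j^{*}/s_j$ at the final loads, $\partial g/\partial x_k=d_k-\tau^{*}$, so we must show $d_k\ge\tau^{*}$ whenever $x_k^{*}=0$, $\ d_k\le\tau^{*}$ whenever $x_k^{*}=m_k$, and $d_k=\tau^{*}$ whenever $0<x_k^{*}<m_k$.

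\textbf{Dynamics of the loop.} I would first record two facts about the second loop. (a) Every iteration moves a non-negative amount from $i$ to $j$, so $l_i$ only decreases and $l_j$ only increases; thus the imbalance $\tau:=l_i/s_i-l_j/s_j$ is non-increasing along the execution, and $\tau^{*}\le\tau$ at every intermediate step. (b) Since $\Delta r'_{ikj}=s_is_j(\tau-d_k)/(s_i+s_j)$ with $\tau$ the imbalance at the moment $k$ is processed: if $\tau\le d_k$ nothing is moved for $k$; if $\tau>d_k$ and the transfer is unclipped it equals $\Delta r'_{ikj}$ and afterwards the imbalance is exactly $d_k$; if it is clipped at $r_{ki}$ then afterwards the imbalance is strictly larger than $d_k$ (because less than $\Delta r'_{ikj}$ was moved). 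Finally, when $k$ is processed all of its requests are still on $i$ (earlier iterations never touch organization $k$), so $x_k^{*}$ is exactly the amount moved during $k$'s own iteration.

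\textbf{Case analysis using the ascending-$d_k$ order.} If $x_k^{*}=0$, nothing was moved for $k$, so $\tau\le d_k$ when $k$ was processed; as $\tau$ only decreases afterwards, $\tau^{*}\le d_k$. If $0<x_k^{*}<m_k$, the transfer for $k$ was positive and unclipped, so right after $k$ the imbalance equals $d_k$; every later organization $k'$ has $d_{k'}\ge d_k$, which equals the current imbalance, so by (b) no further transfer occurs and $\tau^{*}=d_k$. If $x_k^{*}=m_k$, the transfer was clipped, so right after $k$ the imbalance exceeds $d_k$; then I argue by induction over the later organizations that the imbalance never drops below $d_k$: each later $k'$ has $d_{k'}\ge d_k$, and processing it with current imbalance $\tau\ge d_k$ leaves the imbalance equal to $\tau$ (no move), equal to $d_{k'}\ge d_k$ (unclipped move), or larger than $d_{k'}\ge d_k$ (clipped move). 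Hence $\tau^{*}\ge d_k$. In all three cases the required inequality holds, so by convexity of $g$ the output is a global minimiser over reallocations between $i$ and $j$, which is exactly the claim.

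\textbf{Main obstacle.} The delicate case is $x_k^{*}=m_k$: such an organization is frozen once processed while later organizations keep shifting load, so a priori the final imbalance could move past $d_k$ and violate $d_k\le\tau^{*}$. The inductive argument that the imbalance never falls below $d_k$ is the crux, and it is precisely here that sorting the organizations by ascending $c_{kj}-c_{ki}$ is essential; the rest is bookkeeping on the convex-quadratic structure from Section~\ref{sec:optimal-solution} together with the algebraic identity $\Delta r'_{ikj}(l_i-\Delta,l_j+\Delta)=\Delta r'_{ikj}(l_i,l_j)-\Delta$.
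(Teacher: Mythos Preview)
Your proof is correct and follows essentially the same logic as the paper's own argument: both verify, for every organization $k$, that at the final loads no further $i\!\to\!j$ transfer of $k$'s requests is profitable and no $j\!\to\!i$ transfer is profitable, exploiting that the imbalance $\tau=l_i/s_i-l_j/s_j$ is monotone along the loop and that the organizations are processed in ascending $d_k=c_{kj}-c_{ki}$.

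The main difference is in packaging and completeness. The paper gives only a sketch: it appeals to Lemma~\ref{lemma::delegation} to conclude that no \emph{single} organization wants to move requests in either direction, but leaves implicit why this coordinate-wise stationarity rules out improving \emph{multi-organization} exchanges. You make this step explicit by writing the restricted problem as a convex box-constrained program and checking the KKT conditions, so the passage from ``no single-$k$ move helps'' to ``global optimum over all exchanges between $i$ and $j$'' is justified. Your case analysis is also sharper: the paper handles the $j\!\to\!i$ direction only via ``the last $k$ with a positive transfer'', whereas you track all three KKT cases and give the inductive argument for the clipped case $x_k^{*}=m_k$ (which is indeed where the sorted order is indispensable). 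One tiny wording point: when $\Delta r'_{ikj}=m_k$ exactly, the post-transfer imbalance equals $d_k$ rather than strictly exceeding it; your subsequent induction still goes through, so this does not affect the argument.
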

\begin{proof-sketch}
First we show that after the second loop no requests should be transferred from $i$ to $j$. For each organization $k$ the requests owned by $k$ were transferred from $i$ to $j$ in some iteration of the second loop; also, each of the next iterations of the second loop could only cause the increase of the load of $j$ (and decrease of $i$); thus transferring more requests of $k$ from $i$ to $j$ would be inefficient. Second, we will show that after the second loop no requests should be transferred back from $j$ to $i$ either. Let us take the last iteration of the second loop in which the requests of some organization $k$ were transferred from $i$ to $j$. After this transfer we know that $\Delta r_{ikj} = \frac{(s_{j}l_{i} - s_{i}l_{j}) - s_{i}s_{j}(c_{kj} - c_{ki})}{(s_{i} + s_{j})} \geq 0$ (otherwise the transfer would not be optimal). However, this implies that $\Delta r_{ik'j} = \frac{(s_{j}l_{i} - s_{i}l_{j}) - s_{i}s_{j}(c_{k'j} - c_{k'i})}{(s_{i} + s_{j})} \geq 0$ for each server $k'$ considered before $k$. As $\Delta r_{ik'j} \geq 0$ we get $\Delta r_{jk'i} \leq 0$.
\end{proof-sketch}

%\vspace{-10pt}
\subsection{Error estimation}\label{sec:convergence}
The following analysis bounds the distance of the current solution of the distributed algorithm to the optimum as a function of the disparity of servers' load. When running the algorithm, this result can be used to assess whether it is still profitable to continue running the algorithm: if the load disparity is low, the current solution is close to the optimum.

We introduce the following notation for the analysis. 
$\rho'$ is the snapshot (the current solution) derived by distributed algorithm. 
$\rho$ is the optimal solution that minimizes $\sum C_{i}$ (if there are multiple optimal solutions with the same $\sum{C_i}$ , $\rho$ is the closest solution to $\rho'$ in the Manhattan metric).
$(P, \Delta \rho)$ is a weighted, directed \emph{error graph}: $\Delta \rho[i][j]$
indicates the number of requests that should be transferred from server $i$ to $j$ in order to reach $\rho$ from $\rho'$ ($\Delta \rho[i][j]$ requests either belong to $i$, or to $j$, and not to another server $k$). 
We define $\dir$ as the \emph{direction of transport}: $\dir(i, j)=1$ if $i$ transfers to $j$ its own requests; $\dir(i, j)=-1$ if $i$ returns to $j$ the requests that initially belonged to $j$. Let $succ(i)$ denotes the set of successors in the error graph: $succ(i) = \{j: \Delta \rho[i][j] > 0\}$; $prec(i)$ denotes the set of predecessors: $prec(i) = \{j: \Delta \rho[j][i] > 0\}$.

In the error graph, a \emph{negative cycle} is a sequence of servers $i_{1}, i_{2}, \ldots, i_{n}$ such that 
\begin{inparaenum}[(i)]
\item $i_{1} = i_{n};$
\item $\forall_{j \in \{1,\ldots n-1\}}$ $\Delta \rho[i_{j}][i_{j+1}] > 0$; and
\item $\sum_{j=1}^{n-1}\dir(i_{j}, i_{j+1}) c_{i_{j}i_{j+1}} < 0$.
\end{inparaenum} 

A negative cycle is sequence of servers that essentially redirect their requests to one another. A solution without negative cycles has a smaller processing time: after dismantling a negative cycle, loads on servers remain the same, but the communication time is reduced. In Appendix, we show how to detect and remove negative cycles; in order to simplify the presentation of the subsequent analysis, we consider that there are no negative cycles.

\begin{proposition}\label{lemma::convergence}
  If \begin{inparaenum}[(i)]
  \item the error graph $\Delta \rho$ has no negative cycle; and
  \item $\sum_{j} \max_{k} ((\frac{1}{s_j} + \frac{1}{s_k})\Delta r_{jk}) = \Delta R$
    ($\Delta r_{ij}$ is the number of requests which in the current state $\rho'$ would be relied to $j$-th server by the $i$-th server (as the result of Algorithm~\ref{alg:exchangingLoads}),
  \end{inparaenum}
  then $\| \rho - \rho' \|_{1} \leq (4m + 1)\Delta R\sum_i s_{i}$, where $\| \cdot \|_{1}$ denotes the Manhattan metric.
\end{proposition}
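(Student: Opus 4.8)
The plan is to bound $\|\rho - \rho'\|_1$ in two stages: first relate the Manhattan distance between the two solutions to the total load $\sum_{ij}\Delta\rho[i][j]$ that must be transported along the error graph, and then bound that total load by $\Delta R$ times $\sum_i s_i$ (up to the $(4m+1)$ factor). The first stage is essentially bookkeeping: each request whose routing differs between $\rho'$ and $\rho$ contributes to two entries of the $\rho$-vector (the $\rho_{ki}$ it leaves and the $\rho_{kj}$ it enters), and $\rho_{ki} = r_{ki}/n_k$, so $\|\rho-\rho'\|_1 \le 2\sum_{i,j}\Delta\rho[i][j]/n_{\min}$ — but since the statement is phrased purely in terms of the $r$'s and $s$'s, I would instead carry the $n_k$-normalization through and absorb it, or (more likely, matching the paper's style) work with the unnormalized transported load $\|\Delta\rho\|_1 := \sum_{i,j}\Delta\rho[i][j]$ and show $\|\Delta\rho\|_1 \le$ (constant)$\cdot \Delta R \sum_i s_i$, with the conversion to $\rho$-units folded into the constant $4m+1$.

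The heart of the argument is the second stage, and here is where I would use the absence of negative cycles together with Lemma~\ref{lemma::delegationOptimality}. The idea: consider any server $j$ that is ``overloaded'' in $\rho'$ relative to $\rho$, i.e. has net outgoing load in the error graph. Look at an edge $(j,k)$ with $\Delta\rho[j][k]>0$. Since $\rho'$ is a fixed point of the pairwise balancing (at convergence, or a near-fixed-point with residual measured by the $\Delta r_{jk}$'s), the quantity $(\tfrac{1}{s_j}+\tfrac{1}{s_k})\Delta r_{jk}$ controls how far the pair $(j,k)$ is from locally balanced; by Lemma~\ref{lemma::delegation} this is exactly the expression $\big((s_k l_j - s_j l_k) - s_j s_k(c_{jk}-\cdots)\big)/(s_j+s_k)$ divided appropriately, so it measures the load-plus-latency imbalance across that edge. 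The claim is that moving $\Delta\rho[j][k]$ units from $j$ to $k$ cannot be beneficial unless the load imbalance $l_j^{\rho} - l_k^{\rho}$ (the difference of loads in the optimal solution along that edge, adjusted by $s$-weights and by the latency term) is itself controlled — but since in $\rho$ there is no further beneficial pairwise move either (optimality), the per-edge load difference must be bounded by the latency gap, which telescopes along paths. Summing over the at most $m$ edges on any path, and over the $\le m$ paths through a given server, yields the $O(m)$ blow-up, and pulling out $\sum_i s_i$ converts the $1/s$-weighted quantity $\Delta R$ into an unweighted load bound.

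More concretely, I would proceed as follows. (1) Decompose the error graph (a flow with no negative cycle) into a set of simple paths from net-sources to net-sinks, so that $\sum_{i,j}\Delta\rho[i][j] \le m \cdot \sum_{\text{paths}} (\text{flow on path}) \le m \cdot \max_{\text{path }P} |P| \cdot (\text{max edge flow})$; the no-negative-cycle hypothesis is what guarantees such a cycle-free decomposition exists. (2) For a single edge $(j,k)$ of the error graph, use Lemma~\ref{lemma::delegation} applied to the pair $(j,k)$ in state $\rho'$: the optimal transfer there is $\Delta r'_{jkk}$-type expression, and its sign/magnitude relative to $\Delta\rho[j][k]$ forces $(s_k l_j - s_j l_k)/(s_j s_k) = (\tfrac{1}{s_j}l_j - \tfrac{1}{s_k}l_k)$ to equal $(c_{kj}-c_{ki})$ plus a term of size at most $(\tfrac1{s_j}+\tfrac1{s_k})\Delta r_{jk} \le$ the $j$-th summand of $\Delta R$. (3) Telescope the load-slack inequality along a path and over all paths and all servers; collect the $\le 4m+1$ factor (the $4$ accounting for the two endpoints of each transported request and the two ``directions'' of $\dir$, the $+1$ for a boundary term), and multiply through by $\sum_i s_i$ to clear denominators.

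The main obstacle I anticipate is step (2): making precise the sense in which $\rho'$ being a fixed point of the pairwise algorithm — in particular the combination of Lemma~\ref{lemma::delegationOptimality} (no beneficial pairwise exchange at $\rho'$) and the definition of $\Delta r_{jk}$ as the residual transfer — pins down the $s$-weighted load difference $\tfrac1{s_j}l_j - \tfrac1{s_k}l_k$ across an error-graph edge in terms of the latency difference $c_{kj}-c_{ki}$ up to an additive error bounded by $\Delta R$. The subtlety is that an error-graph edge $(j,k)$ may carry requests of a third organization $k'$ with its own latencies $c_{k'j}, c_{k'i}$, so the relevant imbalance quantity is not symmetric and one must use the sorted-order structure from Algorithm~\ref{alg:exchangingLoads} (the servers processed earliest get the most favorable treatment). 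Handling this, and correctly propagating the $\dir(\cdot,\cdot)$ signs so that the latency terms cancel telescopically rather than accumulate (this is exactly where ``no negative cycle'' is used a second time, to ensure the sum of latency terms around any would-be cycle is nonnegative), is the delicate part; the rest is careful but routine summation.
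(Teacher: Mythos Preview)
Your proposal has a genuine gap in step~(1), and the overall route diverges from the paper's in a way that leaves step~(3) without the tool it needs.

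\textbf{The gap.} You claim that ``the no-negative-cycle hypothesis is what guarantees such a cycle-free decomposition exists.'' It does not: absence of \emph{negative}-cost cycles in the error graph does not preclude (nonnegative-cost) cycles, so a path decomposition need not exist and your recursion/telescope need not terminate. The paper's first step is precisely to upgrade ``no negative cycle'' to ``no cycle at all,'' and it uses an ingredient you never invoke: the choice of $\rho$ as the optimal solution \emph{closest to $\rho'$ in the Manhattan metric}. If the error graph contained any cycle, that cycle would (by hypothesis~(i)) have nonnegative cost; subtracting the minimum edge weight along it leaves every server's load unchanged (each node on the cycle sends and receives the same amount less), does not increase communication cost, and strictly decreases $\|\rho-\rho'\|_1$ --- contradicting minimality of $\rho$. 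This acyclicity argument is the load-bearing first move, and your plan does not contain it.

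\textbf{Divergence from the paper.} Once acyclicity is in hand, the paper does not decompose into paths and telescope latency differences. It bounds the \emph{per-vertex load error} directly: for a server $i$ with $l'_i\ge l_i$ and any error-graph successor $j$, Lemmas~\ref{lemma::delegation}--\ref{lemma::delegationOptimality} yield
\[
\tfrac{1}{s_i}(l'_i-l_i)\ \le\ \max_{j\in succ(i)}\Big((\tfrac{1}{s_i}+\tfrac{1}{s_j})\Delta r_{ij}+\tfrac{1}{s_j}\max(l'_j-l_j,0)\Big),
\]
which, because the error graph is acyclic, unrolls along successor chains to give $l'_i-l_i\le s_i\Delta R$ (and symmetrically via predecessors when $l'_i<l_i$). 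Then each edge transfer is bounded by $\Delta\rho[i][j]\le \Delta r_{ij}+\frac{s_j}{s_i+s_j}(|l'_i-l_i|+|l'_j-l_j|)$, and a double sum produces the $(4m+1)\Delta R\sum_i s_i$ bound. Notice that after Lemma~\ref{lemma::delegation} is applied the latencies $c_{kj}-c_{ki}$ never reappear --- they are already absorbed into $\Delta r_{ij}$ --- so there is nothing to telescope and no ``second use'' of the no-negative-cycle hypothesis. Your intended mechanism (cancel latency terms along paths using nonnegativity of cycle sums) is a different idea; if you want to pursue it you would still first need the acyclicity upgrade above, and then a separate argument relating path-latency sums to $\Delta R$, which the paper never needs.
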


\begin{proof}
  First we show that there is no cycle in the error graph. By contradiction let us assume that there is a cycle: $i_1, \ldots, i_{n-1}, i_{n}$ (with $i_1=i_n$).
  Because the error graph has no negative cycle, we have: $\sum_{j=1}^{n-1} \dir(i_{j}, i_{j+1}) c_{i_{j}, i_{j+1}} \geq 0$. 
  Now, if we reduce the number of requests sent on each edge of the cycle:
  \begin{align*}
    \Delta \rho [i_{j}, i_{j+1}] :=
    \Delta \rho [i_{j}, i_{j+1}] - min_{k \in \{1,\ldots, n-1\}}( \rho [i_{k}, i_{k+1}])
  \end{align*}
  then the load of the servers $i_{j}, j \in \{1, \ldots, n-1 \}$ will not change (each server both receives and sends $min_{k \in \{1,\ldots, n-1\}}( \rho [i_{k}, i_{k+1}])$ requests less).
  Additionally, the transfers in the network are reduced. Thus, we get a new optimal solution which is closer to $\rho'$ in Manhattan metric, which contradicts that $\rho$ is the optimal.

  Second, we show how to bound the distance to the optimum solution on a server by transfers and loads on neighbors.
  Let $l_{i}$ be the load of the $i$-th server in the optimal solution $\rho$. Let $l'_{i}$ be the load of the $i$-th server in state $\rho'$. 
  Consider a server $i$ for which $l_{i} \leq l'_{i}$. 
  In state $\rho'$, in order to balance $i$ and $j \in succ(i)$, at most $\Delta r_{ij}$ requests must be transferred (Lemma~\ref{lemma::delegationOptimality}).
  For each $k$, $\Delta r_{ikj}$ depends on the difference between weighted loads $l'_is_{j} - l'_js_{i}$ (see the thesis of Lemma~\ref{lemma::delegation}). 
Thus, by Lemma~\ref{lemma::delegationOptimality}, in the current state, $i$ and $j$ would be balanced if the difference in weighted loads is at most $D = (l'_i - \Delta r_{ij})s_{j} - (l'_j + \Delta r_{ij})s_{i}$. 
In the optimal state, the weighted load of $j$ is at least $s_{j}(l'_j - \max((l'_{j} - l_{j}) , 0))$.
In the optimal state, $j$ must be also balanced with $i$, thus, the difference in weighted loads is at most $D$. 
By solving for the reduction of load on $i$, we get that $l'_i$ is decreased by at most $\frac{s_{i} + s_{j}}{s_{j}} \Delta r_{ij} + \max(\frac{s_{i}}{s_{j}}(l'_{j} - l_{j}), 0)$.
  % thus, if the load $l'_j$ is decreased by $\max( l'_{j} - l_{j}, 0)$, and $l'_i$ is decreased by $\frac{a_{i} + a_{j}}{a_{i}} \sum_{k} \Delta r_{ikj} + \max(\frac{a_{j}}{a_{i}}(l'_{j} - l_{j}), 0)$, 
  % no requests will be transfered between $i$ and $j$;  this is the consequence of the following equation:
  % \begin{align*}
  %   a_{i}(l'_i - \sum_{k} \Delta r_{ikj}) - a_{j}(l'_j + \sum_{k} \Delta r_{ikj}) = \\
  %   a_i(l'_i - \frac{a_{i} + a_{j}}{a_{i}}\sum_{k} \Delta r_{ikj} - \max(\frac{a_{j}}{a_{i}}(l'_{j} - l_{j}), 0)) - \\
  %   a_{j}(l'_j - \max((l'_{j} - l_{j}), 0))
  % \end{align*}
  In the optimal solution, all the pairs of servers are balanced; thus, the difference between the current and the optimal load can be bounded by:
  \begin{equation}\label{equation::transferEstimation}
  \begin{aligned}
    \frac{1}{s_{i}}(l'_{i} - l_{i}) \leq \max_{j \in succ(i)}((\frac{1}{s_{i}} + \frac{1}{s_{j}})\Delta r_{ij} +
													 \frac{1}{s_{j}}\max(l'_{j} - l_{j}, 0))
  \end{aligned}
  \end{equation}

  Eq.~\ref{equation::transferEstimation} holds also for any server $j \in succ(i)$ for which  $l_{j} \leq l'_{j}$. It can be recursively expanded until reaching the servers without  successors in the error graph (there are no cycles in the graph). The resulting expanded equation takes into account the path constructed by the maximum imbalance ($\max_{j \in succ(i)}$); the cost of this path is bounded by the cost of all the imbalances, which leads to:
  \begin{equation*}
    l'_{i} - l_{i} \leq s_{i}\sum_{j} \max_{k} \left((\frac{1}{s_j} + \frac{1}{s_k})\Delta r_{jk}\right) = s_{i}\Delta R
  \end{equation*}

  For the servers with $l_{i} > l'_{i}$, we can obtain the similar estimation by taking the set of predecessors $prec$ and expanding the inequalities towards the servers that have no predecessors instead of moving towards those without successors.  

  Eq.~\ref{equation::transferEstimation} considered \emph{loads} $l_i$; the imbalance of transfer $\Delta \rho[i][j]$ can be similarly bounded by $\Delta \rho[i][j] \leq (\Delta r_{ij} + \frac{s_{j}}{s_{i} + s_{j}} (|l'_{i} - l_{i}| + |l'_j - l_j|)$): $ \Delta r_{ij}$ is what would be transfered in the current state; and $|l'_{i} - l_{i}|$ takes into account the transfers that might be triggered by the future changes in the load; the transfers are proportional to the relative speed $\frac{s_{j}}{s_{i} + s_{j}}$. 
  Finally:
  \begin{align*}
    \| \rho - \rho' \|_{1} & = \sum_{i} \sum_{j}\Delta \rho[i][j] \leq \sum_{i} \sum_{j} (\Delta r_{ij} + 4s_{i}\Delta R) \\
    & \leq (4m + 1)\Delta R\sum_i s_{i}
  \end{align*}
\end{proof}

Proposition~\ref{lemma::convergence} gives the estimation of the error for such partial solutions that do not have a negative cycles. Therefore the algorithm that cancels negative cycles (see Appendix) should be run whenever the estimation for distance to the optimal solution is needed. Our experiments show, however, that the negative cycles are rare in practice and that pure Algorithm~\ref{alg:distributedOptimal} can remove them efficiently (Section~\ref{sec:experiments}).

% \vspace{-7pt}
\section{Selfish organizations}\label{sec:selfish-servers}

In this section we consider the case when the organizations are acting selfishly -- the $i$-th of them tries to minimize the total processing time of its own requests -- $C_{i}$.  We are interested in a steady state in which all the peers have no interest in redirecting any of its requests to different servers -- the Nash equilibrium. 

% \vspace{-5pt}
\subsection{Homogeneous network}\label{sec:homogenous-network}

In this section we present the characteristic of the Nash equilibrium in case when all the servers have equal processing power ($\forall_{i} \,\, s_{i} = s$), and when all the connections between servers have the same communication delay ($\forall_{ij} \,\, c_{ij} = c$). We consider homogeneous model, as the modeling of a heterogeneous interconnection graph is complex. The simulation experiments (Section~\ref{sec:experimentalPriceOfAnarchy}) show that in the case of selfish servers the average relative degradation of the system goal on heterogeneous networks is similar to, or lower than on the homogeneous networks.

\begin{lemma}\label{lemma::imbalance}
  For every two servers $i$ and $j$ the difference between their average loads is bounded: $|l_{i} - l_{j}| \leq c \cdot s$  
\end{lemma}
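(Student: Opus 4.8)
The plan is to read the Nash equilibrium off as a system of first–order conditions and then to turn the load gap $l_i-l_j$ into a telescoping sum over organizations in which all but one summand is manifestly non‑positive.

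First I would note that each organization $k$, given the strategies of the others, faces the convex problem of choosing $(r_{k1},\dots,r_{km})$ with $r_{k\cdot}\ge 0$ and $\sum_l r_{kl}=n_k$ so as to minimize $C_k=\sum_l r_{kl}\left(\tfrac{l_l}{2s}+c_{kl}\right)$, where $l_l=\sum_m r_{ml}$. Since $C_k$ is a convex quadratic in $(r_{kl})_l$ and the constraints are affine, a strategy is a best response iff it satisfies the KKT conditions: there is a multiplier $\mu_k$ with $\partial C_k/\partial r_{kl}=\tfrac{l_l+r_{kl}}{2s}+c_{kl}$ equal to $\mu_k$ when $r_{kl}>0$ and at least $\mu_k$ when $r_{kl}=0$. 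Hence, at a Nash equilibrium, for every organization $k$, every server $a$ with $r_{ka}>0$ and every server $b$,
\[
  l_a+r_{ka}+2s\,c_{ka}\ \le\ l_b+r_{kb}+2s\,c_{kb}.
\]
Using $c_{kk}=0$ and $c_{kl}=c$ for $k\ne l$, taking $(a,b)=(i,j)$ this specializes to: $l_i-l_j\le r_{kj}-r_{ki}$ if $k\notin\{i,j\}$; $\;l_i-l_j\le r_{ij}-r_{ii}+2cs$ if $k=i$; $\;l_i-l_j\le r_{jj}-r_{ji}-2cs$ if $k=j$.

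Next, assuming w.l.o.g. $l_i\ge l_j$ and writing $\delta:=l_i-l_j\ge 0$, I would expand $\delta=\sum_k(r_{ki}-r_{kj})$. For $k\notin\{i,j\}$ the summand is $\le 0$: if $r_{ki}=0$ this is clear, otherwise the first specialization gives $r_{ki}-r_{kj}\le-\delta\le 0$. Similarly $r_{ji}-r_{jj}\le 0$ (if $r_{ji}=0$ clear; otherwise the $k=j$ specialization gives $r_{ji}-r_{jj}\le-\delta-2cs\le 0$). Discarding these non‑positive terms leaves $\delta\le r_{ii}-r_{ij}$. If $r_{ii}=0$ this forces $\delta\le-r_{ij}\le 0$, hence $\delta=0$; otherwise the $k=i$ specialization gives $r_{ii}-r_{ij}\le-\delta+2cs$, and combining the two yields $\delta\le-\delta+2cs$, i.e.\ $\delta\le cs$. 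Together with the symmetric case this gives $|l_i-l_j|\le c\cdot s$.

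The step that needs care is the bookkeeping in the last paragraph, since it is exactly where the constant $cs$ (rather than the easy $2cs$) comes from. Applying the equilibrium inequality only to the owner $i$ of the heavier server gives $r_{ii}-r_{ij}\le-\delta+2cs$, which alone bounds $\delta$ only by $2cs$; the improvement requires first collapsing the \emph{entire} gap onto the single contribution $r_{ii}-r_{ij}$ — which uses that every other organization, and in particular the owner $j$ of the lighter server, contributes non‑positively, a direct consequence of the equilibrium inequalities together with the off‑diagonal delays being the constant $c$ — and then observing that the resulting bound on $r_{ii}-r_{ij}$ already carries a $-\delta$ term, so that solving for $\delta$ halves the constant. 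The remaining points (existence of the KKT multipliers for a convex program, and the corner cases $r_{ki}=0$) are routine.
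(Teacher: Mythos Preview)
Your proof is correct and follows essentially the same route as the paper's: both arguments use the equilibrium (first-order) conditions to show that every organization $k\neq i$ contributes $r_{ki}-r_{kj}\le 0$, and then apply the owner-$i$ condition $r_{ii}-r_{ij}\le -(l_i-l_j)+2cs$ to close the gap. The only cosmetic differences are that you phrase the equilibrium conditions via KKT and argue directly, whereas the paper derives the same inequality by an explicit perturbation and finishes by contradiction.
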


\begin{proof}
  (by contradiction) Assume $|l_{i} - l_{j}| > c\cdot s$. Without loosing the generality, $l_{i} > l_{j}$. Recall that $r_{ij}$ is the number of redirected requests $r_{ij} = n_{i} \rho_{ij}$. For each sever $k$ ($k \neq i$), it is not profitable to put more of its requests to the more loaded server, so $r_{kj} \geq r_{ki}$. Now we want to find the relation between $l_{i}, l_{j}, r_{ij}$ and $r_{ii}$. In a Nash equilibrium,  it is not profitable for $i$ to redirect any additional $x$ of its own requests from itself to $j$, which can be formally expressed by the equation:
  \begin{align*}
    0 & \leq \: \frac{(l_{i} - x)(r_{ii} - x)}{2s} + \frac{(l_{j} + x )(r_{ij} + x)}{2s} + c(r_{ij} + x) \\
      & - \frac{l_{i}r_{ii}}{2s} - \frac{l_{j}r_{ij}}{2s} - cr_{ij} \text{,}
  \end{align*}
  equivalent to:
  \begin{align*}
    r_{ij} - r_{ii} + 2x & \geq l_{i} - l_{j} - 2c \cdot s \text{.} \\
    \intertext{Because the inequality must hold for every positive $x$, and because $l_{i} - l_{j} > c \cdot s$}
    r_{ij} - r_{ii} & > c \cdot s - 2c \cdot s = -c \cdot s
    \intertext{Now we can show the contradiction, because}
    l_{j} = \sum_{k = 1}^{k = m}r_{kj} & > \sum_{k = 1}^{k = m}r_{ki} - c \cdot s = l_{i} - c \cdot s
    \intertext{from which it follows that}
    l_{i} - l_{j} & < c \cdot s \text{.}
  \end{align*}
\end{proof}

Let us denote the average load on the server as $l_{av}$, thus $l_{av} = \frac{1}{m}\sum_{i = 1}^{i = m}l_{i}$. The following theorem gives the tight estimation of the price of anarchy when the servers are loaded compared to the delay ($l_{av} \gg 2cs$). (If the servers are not loaded, our estimation of the price of anarchy is dominated by $O((\frac{c}{l_{av}})^{2})$ element).

\begin{theorem}
  The price of anarchy in the homogeneous network is:
  $PoA = 1 + \frac{2cs}{l_{av}} + O((\frac{cs}{l_{av}})^{2})$.
\end{theorem}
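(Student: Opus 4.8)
The plan is to use the fact that in the homogeneous model the social cost decomposes additively. Writing $R=\sum_{i\neq j}r_{ij}$ for the total relayed load and using $s_i=s$, $c_{ij}=c$ for $i\neq j$, $c_{ii}=0$ and $\sum_i r_{ij}=l_j$, one gets for \emph{every} feasible profile
\[
\sum C_i \;=\; \frac{1}{2s}\sum_{j}l_j^{2}\;+\;cR .
\]
Since every request is executed somewhere, the total load $N=\sum_i n_i=\sum_j l_j=m\,l_{av}$ is an instance constant, so by convexity of $t\mapsto t^{2}$ we get $\sum_j l_j^{2}\ge N^{2}/m=m\,l_{av}^{2}$, and with $R\ge 0$ this yields $\mathrm{OPT}\ge m\,l_{av}^{2}/(2s)$.

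For the upper bound I would bound the two parts of an arbitrary Nash profile separately. By Lemma~\ref{lemma::imbalance} all loads lie in an interval of width at most $cs$ containing their mean $l_{av}$, hence $|l_j-l_{av}|\le cs$ for every $j$, so $\sum_j l_j^{2}=m\,l_{av}^{2}+\sum_j (l_j-l_{av})^{2}\le m\,l_{av}^{2}+m(cs)^{2}$. For the communication part, every organization relays at most its own load, so $R\le\sum_i n_i=m\,l_{av}$. Adding the two bounds,
\[
\mathrm{NASH}\;\le\;\frac{m\,l_{av}^{2}}{2s}+\frac{m(cs)^{2}}{2s}+c\,m\,l_{av}\;=\;\frac{m\,l_{av}^{2}}{2s}\Bigl(1+\frac{2cs}{l_{av}}+O\bigl((cs/l_{av})^{2}\bigr)\Bigr),
\]
and dividing by the lower bound on $\mathrm{OPT}$ gives $PoA\le 1+\frac{2cs}{l_{av}}+O((cs/l_{av})^{2})$. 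I expect this half to be routine — it is essentially substituting Lemma~\ref{lemma::imbalance} and the trivial relay bound into the decomposition.

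The harder half is tightness: exhibiting instances whose price of anarchy approaches $1+\frac{2cs}{l_{av}}$. I would use the perfectly balanced instance $n_i=l_{av}$ for all $i$ (valid in the regime $l_{av}\ge 2cs$ of the theorem). There the optimum relays nothing: starting from the all-local solution the objective $\rho^{T}Q\rho+b^{T}\rho$, which is convex by Section~\ref{sec:optimal-solution}, has directional derivative $c$ times the mass moved off-site along every feasible direction, hence $\mathrm{OPT}=m\,l_{av}^{2}/(2s)$. Now take the symmetric profile in which every server keeps a fraction $\alpha$ of its own load and relays $\tfrac{1-\alpha}{m-1}$ of it to each of the other $m-1$ servers; all loads are again $l_{av}$. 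Since each $C_i$ is strictly convex in organization $i$'s own relay vector (Hessian $\tfrac{1}{s}I$), the first-order indifference condition is sufficient, and the profile is a Nash equilibrium exactly when the marginal cost of executing a request locally equals that of relaying it, i.e.\ $\tfrac{(1+\alpha)l_{av}}{2s}=\tfrac{l_{av}}{2s}\bigl(1+\tfrac{1-\alpha}{m-1}\bigr)+c$, giving $\alpha=\tfrac{1}{m}\bigl(1+\tfrac{2cs(m-1)}{l_{av}}\bigr)\in[0,1]$ precisely when $l_{av}\ge 2cs$. Then $R=m(1-\alpha)l_{av}=(m-1)l_{av}\bigl(1-\tfrac{2cs}{l_{av}}\bigr)$, so $\mathrm{NASH}=\tfrac{m\,l_{av}^{2}}{2s}+c(m-1)l_{av}\bigl(1-\tfrac{2cs}{l_{av}}\bigr)$ and $PoA=1+\tfrac{m-1}{m}\cdot\tfrac{2cs}{l_{av}}\bigl(1-\tfrac{2cs}{l_{av}}\bigr)\to 1+\tfrac{2cs}{l_{av}}-O((cs/l_{av})^{2})$ as $m\to\infty$, which together with the upper bound shows the estimate is tight. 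The main obstacle is verifying that this symmetric profile really is a Nash equilibrium — that no organization gains by relaying unevenly, or by relaying more or less — which is where the strict convexity of each $C_i$ in its own relay vector, hence the sufficiency of the first-order condition and the absence of boundary cases (all $r_{ij}>0$), is essential.
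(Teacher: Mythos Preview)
Your proposal is correct and follows essentially the same route as the paper: the same decomposition of $\sum C_i$ into a load-squared term and a communication term, the same lower bound on the optimum, the same application of Lemma~\ref{lemma::imbalance} together with the trivial relay bound $R\le m\,l_{av}$ for the upper bound, and the same symmetric instance $n_i=l_{av}$ with the same Nash profile (your $\alpha$ coincides with the paper's $r_{ii}=2cs+(l_{av}-2cs)/m$) for tightness. The only notable difference is cosmetic: you verify the equilibrium via strict convexity of $C_i$ in organization $i$'s relay vector plus the first-order indifference condition, whereas the paper writes out and checks the two deviation inequalities directly; these are equivalent, and your version is arguably cleaner.
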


\begin{proof}
  (upper bound) We denote the load imbalance on the $i$-th server as $\Delta_{i} = l_{i} - l_{av}$.  It follows that $\sum_{i = 1}^{i = m}\Delta_{i} = 0$. 
  Also, from Lemma~\ref{lemma::imbalance} we have $\Delta_{i} \leq c \cdot s$. Additionally, each request can be relied at most once, thus the total time used for communication is bounded by $m l_{av} c$.
  Therefore, the total processing time in case of selfish peers, $\sum C_i(\text{self})$ is bounded:
  \begin{align*}
    \sum C_i (\text{self}) & \leq ml_{av}c + \sum_{i }\frac{(l_{i})^2}{2s}
                           = ml_{av}c + \sum_{i}\frac{(l_{av} + \Delta_{i})^2}{2s} \\
                           & = \frac{ml_{av}^2}{2s} + \sum_{i}\frac{\Delta_{i}^{2}}{2s} + ml_{av}c
                           \leq \frac{ml_{av}^2}{2s} + \frac{mc^2s}{2} + ml_{av}c
  \end{align*}

  The total processing time is the smallest when the servers have equal load (each server processes exactly $l_{av}$ requests) and do not communicate, thus the optimum is bounded by $(\sum C_i)^* \geq \frac{m l_{av}^2}{2s}$.

  Thus, the price of anarchy is bounded by:
  \begin{equation*}
    PoA \leq \frac{ml_{av}^2 +  2ml_{av}cs + mc^2s^2}{ ml_{av}^2} = 1 + \frac{2cs}{l_{av}} + (\frac{cs}{l_{av}})^2
  \end{equation*}

  (tightness) Consider an instance with servers having equal initial load: $\forall_{i}\,\, n_{i} = l_{av}$. 

  In the optimal solution no requests will be redirected. 

  When servers are selfish, the $i$-th server will redirect to $j$-th server ($i \neq j$) $\frac{l_{av} - 2c \cdot s}{m}$ requests and will execute $(2c \cdot s + \frac{l_{av} - 2c \cdot s}{m}$ of its own requests on itself. As a result: $l_i = l_{av}$.

  This is a Nash Equilibrium state, because it is not profitable for any server to redirect any $x$ more of its own requests to the other server, nor to execute any $x$ more requests on itself instead of some other server, as the two following inequalities hold for every positive $x$:
  \begin{align*}
    0 & < \frac{l_{av} - x}{2s}(2c \cdot s + \frac{l_{av} - 2c \cdot s}{m} - x)
       + \frac{l_{av} + x}{2s}(\frac{l_{av} - 2c \cdot s}{m} + x) \\
       & + cx -\frac{l_{av}}{2s}(2c \cdot s + \frac{l_{av} - 2c \cdot s}{m}) -\frac{l_{av}}{2s}(\frac{l_{av} - 2c \cdot s}{m}) \\
    0 & < \frac{l_{av} + x}{2s}(2c \cdot s + \frac{l_{av} - 2c \cdot s}{m} + x)
       + \frac{l_{av} - x}{2s}(\frac{l_{av} - 2c \cdot s}{m} - x) \\
       & - cx -\frac{l_{av}}{2s}(2c \cdot s + \frac{l_{av} - 2c \cdot s}{m}) -\frac{l_{av}}{2s}(\frac{l_{av} - 2c \cdot s}{m})
  \end{align*}
  % It is also not profitable for any server $i$ to execute any of it requests on a server $k$ instead of $j$ ($j \neq k, i\neq j, i \neq k$).

  Thus, we get the lower bound on the price of anarchy:
  \begin{align*}
    PoA & \geq \frac{ml_{av}^{2} + m(l_{av} - 2c \cdot s - \frac{l_{av} - 2c\cdot s}{m})c \cdot 2s}{ml_{av}^{2}} \notag \\
       & = 1 + \frac{2cs}{l_{av}} - 4(\frac{sc}{l_{av}})^{2} - \frac{2(l_{av} - 2c^2s^2)}{ml_{av}^{2}} \notag
        \geq 1 + \frac{2cs}{l_{av}} - 4(\frac{cs}{l_{av}})^{2} \textrm{.}
  \end{align*}
  Summarizing:
  \begin{align*}
  1 + \frac{2cs}{l_{av}} - 4(\frac{cs}{l_{av}})^{2} \leq PoA \leq 1 + \frac{2cs}{l_{av}} + (\frac{cs}{l_{av}})^2
  \end{align*}
\end{proof}

The price of anarchy depends on the average load on the server and on the network delay. For the more general case, in Section \ref{sec:experimentalPriceOfAnarchy} we present the estimations derived from simulations.

\vspace{-7pt}
\section{Simulation Experiments}\label{sec:experiments}

In this section we show the results from the two groups of experiments. First, we investigate convergence time of the distributed algorithm. Second, we assess the loss of performance in an organizationally-distributed system compared to the optimal, central solution. The loss is computed as a ratio of the total processing times. 

\vspace{-5pt}
\subsection{Settings}\label{sec:settings}

We experimented on two kinds of networks: homogeneous, with equal communication latencies ($c_{ij}$ = 20); and heterogeneous, where latencies were based on measurements between PlanetLab nodes\footnote{\url{http://iplane.cs.washington.edu/data/data.html}} expressed in milliseconds\footnote{The dataset does not contain latencies for all pairs of nodes, so we had to complement the data by calculating minimal distances.}.

In the initial experiments, we analyzed networks composed of 20, 30, 50, 100, 200 and 300 serves. We also performed some experiments on larger networks (500, 1000, 2000, 3000 servers). The processing speeds of the servers $s_{i}$ were uniformly distributed on the interval $\langle 1, 5 \rangle$.

We conducted the experiments for exponential and uniform distribution of the initial load over the servers. For each distribution we analyzed five cases with the average load equal to 10, 20, 50, 200 and 1000 requests (assuming that processing a single request on a single server takes 1ms). We also analyzed the case of peak distribution -- with 100.000 requests owned by a single server.

%For each combination of the settings (network type, number of servers, load type, average load) we ran 50 independent experiments. In each experiment, we sampled the initial loads according to a chosen distribution and, in the case of a heterogeneous network, we used a random subset of the initial network to obtain communication delays. 

We evaluated the result based on the distance to the optimal solution, which because of the $O(m^6)$ complexity of standard solvers (see Section \ref{sec:optimal-solution}) was approximated by our distributed algorithm.
%We proceeded as follows. Proposition \ref{lemma::convergence} gives a precise estimation of the Manhattan distance from the current solution to the optimal one. We ran the distributed algorithm until the Manhattan distance to the optimum (the absolute error) is less than 0.1\% of the average load. This solution gave a even tighter estimate of the optimal total processing time $\Sigma C_i$ (if the Manhattan distance is bounded by $\epsilon l_{av}$, the relative error of the total processing time is bounded by $O(\frac{\epsilon}{m}$)). 

%For each test case, having the (estimated) optimal total processing time, we ran the distributed algorithm until the relative error in $\Sigma C_i$ is less than 2\% and less than 0.1\%.

\begin{figure}[t!]
  \centering
  \includegraphics[width=0.47\textwidth]{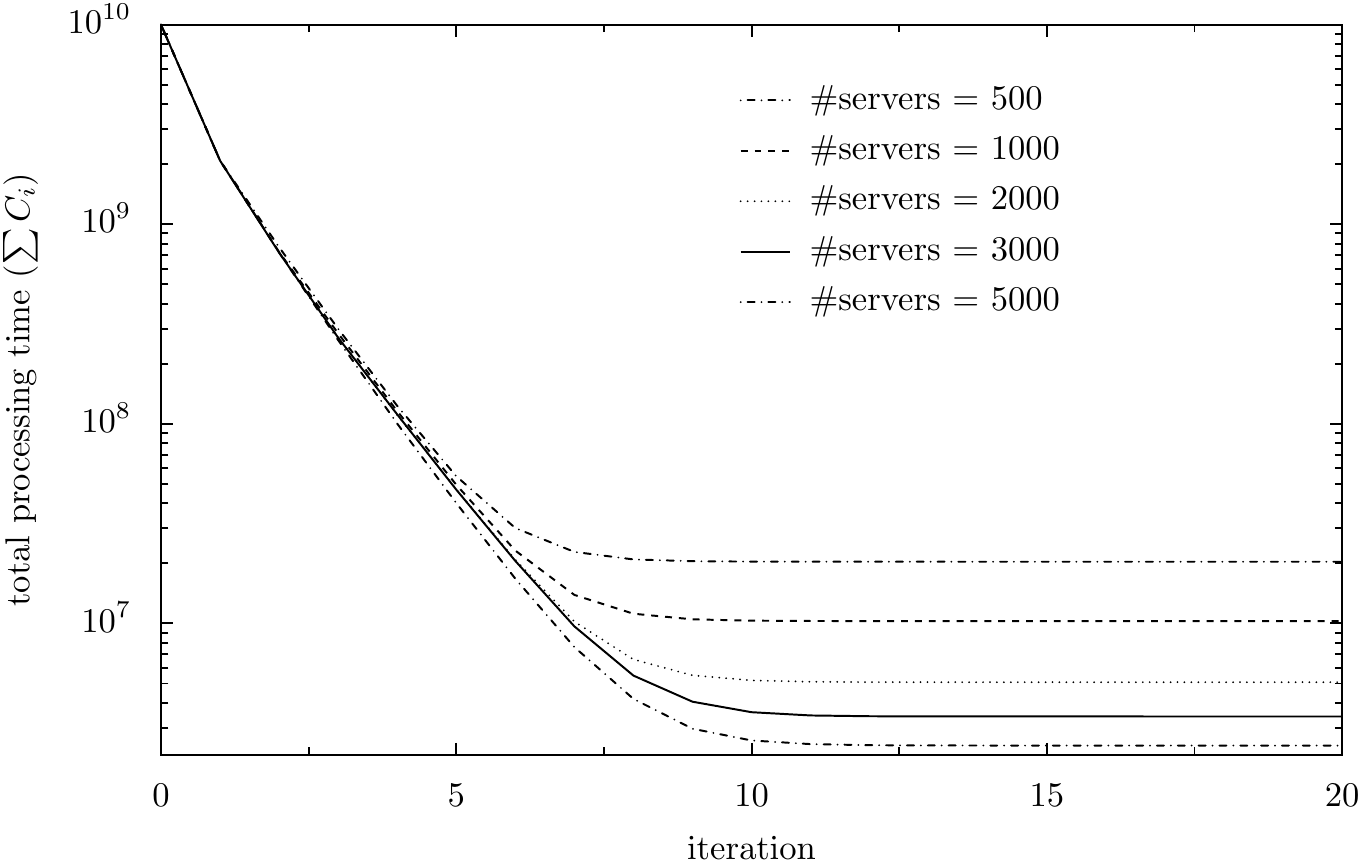}
  \caption{The convergence of the distributed algorithm for peak distribution of initial loads.}
  \label{fig:convergence}
  \vspace{-15pt}
\end{figure}

\vspace{-7pt}
\subsection{Convergence time of the distributed algorithm}\label{sec:exp-distr-algor}

In the first series of experiments, we evaluated the efficiency of the distributed algorithm measured as the number of iterations the algorithm must perform in order to decrease the difference between the total processing times in the current and the optimal requests distributions to less than 2\% of the average load. In a single iteration of the distributed algorithm, each server executes Algorithm~\ref{alg:distributedOptimal}; if there were many pairs of the servers to be optimized we run optimization in the random order. Table~\ref{table:algorithmConvergence} summarizes the results.

The results indicate that the number of iterations mostly depends on the size of the network and on the distribution of the initial load. The type of the network (planet-lab vs. homogeneous) does not influence the convergence time. Larger networks and peak distribution result in higher convergence times. In all considered networks, the algorithm converged in at most 9 iterations. 

Next, we decreased the required precision error from 2\% to 0.1\%, and ran the same experiments. The results are given in Table \ref{table:algorithmConvergence2}. In this case, similarly, the required number of iterations was the highest for peak distribution of the initial load. In each case the algorithm converged in at most 11 iterations. Even for 300 servers the average number of iterations is below 8. Also, the standard deviations are low, which indicates that the algorithm is stable with respect to its fast convergence.

\begin{table}[t!]
  \small
  \centering
  \begin{tabular}{cc|c|c|c|}
    \cline{3-5}
    & & \multicolumn{3}{|c|}{\# iterations} \\
    \cline{3-5}
    & & average & max & st. dev. \\
    \cline{1-5}
    \multicolumn{1}{|c|}{\multirow{2}{*}{$m$ $\leq$ 50}} &
    \multicolumn{1}{|c|}{uniform} & 1.65 & 3 & 0.49     \\
    \cline{2-5}
    \multicolumn{1}{|c|}{} &
    \multicolumn{1}{|c|}{exp.} & 2.35 & 3 & 0.47   \\
    \cline{2-5}
    \multicolumn{1}{|c|}{} &
    \multicolumn{1}{|c|}{peak} & 4.87 & 6 & 0.71    \\
    \cline{1-5}
    \multicolumn{1}{|c|}{\multirow{2}{*}{$m$ = 100}} &
    \multicolumn{1}{|c|}{uniform} & 2.0 & 2.0 & 0.0     \\
    \cline{2-5}
    \multicolumn{1}{|c|}{} &
    \multicolumn{1}{|c|}{exp.} & 2.62 & 3 & 0.48  \\
    \cline{2-5}
    \multicolumn{1}{|c|}{} &
    \multicolumn{1}{|c|}{peak} & 6.88 & 7 & 0.32    \\
    \cline{1-5}
    \multicolumn{1}{|c|}{\multirow{2}{*}{$m$ = 200}} &
    \multicolumn{1}{|c|}{uniform} & 2.1 & 3 & 0.33     \\
    \cline{2-5}
    \multicolumn{1}{|c|}{} &
    \multicolumn{1}{|c|}{exp.} & 3.1 & 4 & 0.33   \\
    \cline{2-5}
    \multicolumn{1}{|c|}{} &
    \multicolumn{1}{|c|}{peak} & 7.84 & 8 & 0.37    \\
    \cline{1-5}
    \multicolumn{1}{|c|}{\multirow{2}{*}{$m$ = 300}} &
    \multicolumn{1}{|c|}{uniform} & 2.0 & 2 & 0.0     \\
    \cline{2-5}
    \multicolumn{1}{|c|}{} &
    \multicolumn{1}{|c|}{exp.} & 3.25 & 4 & 0.43   \\
    \cline{2-5}
    \multicolumn{1}{|c|}{} &
    \multicolumn{1}{|c|}{peak} &  8.0 & 8 & 0.0   \\
    \cline{1-5}
  \end{tabular}
  \caption{The number of iterations of the distributed algorithm required to obtain at most 2\% relative error in the total processing time $\Sigma C_i$.}
  \label{table:algorithmConvergence}
\end{table}
\begin{table}[t]
  \small
  \centering
  \begin{tabular}{cc|c|c|c|}
    \cline{3-5}
    & & \multicolumn{3}{|c|}{\# iterations} \\
    \cline{3-5}
    & & average & max & st. dev. \\
    \cline{1-5}
    \multicolumn{1}{|c|}{\multirow{2}{*}{$m$ $\leq$ 50}} &
    \multicolumn{1}{|c|}{uniform} & 5.1 & 7 & 1.0     \\
    \cline{2-5}
    \multicolumn{1}{|c|}{} &
    \multicolumn{1}{|c|}{exp} & 5.5 & 7 & 0.9   \\
    \cline{2-5}
    \multicolumn{1}{|c|}{} &
    \multicolumn{1}{|c|}{peak} & 6.4 & 7 & 0.5    \\
    \cline{1-5}
    \multicolumn{1}{|c|}{\multirow{2}{*}{$m$ = 100}} &
    \multicolumn{1}{|c|}{uniform} & 5.8 & 9 & 1.6     \\
    \cline{2-5}
    \multicolumn{1}{|c|}{} &
    \multicolumn{1}{|c|}{exp.} & 6.3 & 9 & 1.5  \\
    \cline{2-5}
    \multicolumn{1}{|c|}{} &
    \multicolumn{1}{|c|}{peak} & 8.0 & 9 & 0.2    \\
    \cline{1-5}
    \multicolumn{1}{|c|}{\multirow{2}{*}{$m$ = 200}} &
    \multicolumn{1}{|c|}{uniform} & 6.1 & 9 & 2.2     \\
    \cline{2-5}
    \multicolumn{1}{|c|}{} &
    \multicolumn{1}{|c|}{exp.} & 7.1 & 10 & 2.0   \\
    \cline{2-5}
    \multicolumn{1}{|c|}{} &
    \multicolumn{1}{|c|}{peak} & 9.9 & 10 & 0.3    \\
    \cline{1-5}
    \multicolumn{1}{|c|}{\multirow{2}{*}{$m$ = 300}} &
    \multicolumn{1}{|c|}{uniform} & 6.2 & 10 & 2.4     \\
    \cline{2-5}
    \multicolumn{1}{|c|}{} &
    \multicolumn{1}{|c|}{exp.} & 7.7 & 11 & 2.0   \\
    \cline{2-5}
    \multicolumn{1}{|c|}{} &
    \multicolumn{1}{|c|}{peak} &  10.0 & 10 & 0.0   \\
    \cline{1-5}
  \end{tabular}
  \caption{The number of iterations of the distributed algorithm required to obtain at most 0.1\% relative error in the total processing time $\Sigma C_i$.}
  \label{table:algorithmConvergence2}
  \vspace{-10pt}
\end{table}

% here, similarly to the previous experiment, the cycles elimination algorithm did not influence the efficiency of the algorithm.

Also, we assessed whether a variation of the distributed algorithm that does not eliminate negative cycles (Appendix~\ref{sec:remov-negat-cycl}) has a slower convergence time. Although required to prove the convergence (Section~\ref{sec:convergence}), eliminating the negative cycles is complex in implementation and dominates the execution time.

We compared two versions of the distributed algorithm: without negative cycle removal; and with the removal every two iterations of the algorithm. 
%% Similarly to the previous experiment, we measured the number of iterations required by the algorithm to decrease the difference between the total processing times in current and optimal requests distributions to less than 2\% and 0.1\% of the optimal value. 
The number of iterations for two versions of the algorithm were \emph{exactly} the same in all 6000 experiments. These result show that the cycles which happen in practice can be efficiently removed by pure Algorithm~\ref{alg:exchangingLoads}. Also, the negative cycles are rare in practice.
%% Removing of negative cycles does not speed up the distributed algorithm.

%% Motivated by the results of the previous experiment, 
Finally, we analyzed the convergence of the distributed algorithm without negative cycles elimination on larger networks (Figure~\ref{fig:convergence}). The previous experiments shown that the algorithm convergence is the slowest for peak distribution of the initial load, therefore we chose this case for the analysis. The experiments used heterogeneous network. 
%% In the general case, without eliminating cycles, we are not able to estimate the distance to the optimal solution; however, for the peak distribution, in which there is a single owner of the requests, a negative cycle cannot exist. 
The results indicate that even for larger networks the total processing time decreases exponentially.

\begin{table}[t!]
  \centering
  \small
  \begin{tabular}{ccc|c|c|c|}
    \cline{4-6}
    & & & \multicolumn{3}{|c|}{Ratio} \\
    \cline{4-6}
    & & & avg. & max & st. dev. \\
    \cline{1-6}
    \multicolumn{1}{|c|}{\multirow{6}{*}{\begin{sideways}const $s_{i}$\end{sideways}}} &
    \multicolumn{1}{|c|}{\multirow{2}{*}{$l_{av}$ $\leq$ 30}} &
    \multicolumn{1}{|c|}{$c_{ij}$ = 20} & 1.041 & 1.098 & 0.029     \\
    \cline{3-6}
    \multicolumn{1}{|c|}{} & \multicolumn{1}{|c|}{} &
    \multicolumn{1}{|c|}{PL} & 1.014 & 1.049 & 0.007     \\
    \cline{2-6}
    \multicolumn{1}{|c|}{} & \multicolumn{1}{|c|}{\multirow{2}{*}{$l_{av}$ = 50}} &
    \multicolumn{1}{|c|}{$c_{ij}$ = 20} & 1.114 & 1.150 & 0.031  \\
    \cline{3-6}
    \multicolumn{1}{|c|}{} & \multicolumn{1}{|c|}{} &
    \multicolumn{1}{|c|}{PL} & 1.011 & 1.033 & 0.006 \\
    \cline{2-6}
    \multicolumn{1}{|c|}{} & \multicolumn{1}{|c|}{\multirow{2}{*}{$l_{av}$ $\geq$ 200}} &
    \multicolumn{1}{|c|}{$c_{ij}$ = 20} & 1.024 & 1.055 & 0.018 \\
    \cline{3-6}
    \multicolumn{1}{|c|}{} & \multicolumn{1}{|c|}{} &
    \multicolumn{1}{|c|}{PL} & 1.003 & 1.022 & 0.003 \\
    \cline{1-6}
    \multicolumn{1}{|c|}{\multirow{6}{*}{\begin{sideways}uniform $s_{i}$\end{sideways}}} &
    \multicolumn{1}{|c|}{\multirow{2}{*}{$l_{av}$ $\leq$ 30}} &
    \multicolumn{1}{|c|}{$c_{ij}$ = 20} & 1.000 & 1.022 & 0.001     \\
    \cline{3-6}
    \multicolumn{1}{|c|}{} & \multicolumn{1}{|c|}{} &
    \multicolumn{1}{|c|}{PL} & 1.000 & 1.000 & 0.000     \\
    \cline{2-6}
    \multicolumn{1}{|c|}{} & \multicolumn{1}{|c|}{\multirow{2}{*}{$l_{av}$ = 50}} &
    \multicolumn{1}{|c|}{$c_{ij}$ = 20} & 1.041 & 1.062 & 0.018  \\
    \cline{3-6}
    \multicolumn{1}{|c|}{} & \multicolumn{1}{|c|}{} &
    \multicolumn{1}{|c|}{PL} & 1.000 & 1.000 & 0.000 \\
    \cline{2-6}
    \multicolumn{1}{|c|}{} & \multicolumn{1}{|c|}{\multirow{2}{*}{$l_{av}$ $\geq$ 200}} &
    \multicolumn{1}{|c|}{$c_{ij}$ = 20} & 1.001 & 1.029 & 0.006 \\
    \cline{3-6}
    \multicolumn{1}{|c|}{} & \multicolumn{1}{|c|}{} &
    \multicolumn{1}{|c|}{PL} & 1.000 & 1.000 & 0.000 \\
    \cline{1-6}
  \end{tabular}
  \caption{Experimental assessment of the cost of selfishness: ratios between total processing times in cases of selfish and cooperative servers.}
  \label{table:priceOfAnarchy}
  \vspace{-7pt}
\end{table}

%To summarize, even though the distributed algorithm requires \fixme{xxx} iterations on the average to converge, it is still faster than existing, centralized optimization methods for the constrained quadratic programming problem. \fixme{Piotr: ile mniej wiecej czasu dziala algorytm rozproszony, a ile scenralizowany?}

\vspace{-5pt}
\subsection{Cost of selfishness}\label{sec:experimentalPriceOfAnarchy}

In the second series of experiments we experimentally measured the cost of selfishness as the ratio between total processing times in cases of selfish and cooperative servers (Table~\ref{table:priceOfAnarchy}). In each experiment, the Nash equilibrium was approximated by the following heuristics. Each server was playing its best response to the current distribution of requests. We terminated when all servers in two consecutive steps changed the distribution of their requests by less than 1\%.
We computed the ratio of the total processing times: the (approximated) Nash equilibrium to the optimal value.

The cost of selfishness is low. The average is below $1.06$; and the maximal value is below $1.15$. The estimation of the cost of selfishness is higher in case of constant processing rates $s_{i}$. It additionally depends on the ratio between the average initial load and the network latency and on the structure of the network. The highest cost is for homogeneous networks with constant processing rates and having medium initial load about 2 times longer than the mean communication delay. The experiments show that the cost of selfishness is independent of the size of the network and the type of distribution of initial loads.  

\vspace{-5pt}
\section{Extension: requests of different processing times; replication}\label{sec:replica-placement}
Up to this point, we modeled a distributed request processing system, in which requests have the same size. In this section we show how our results extend to the model where the individual requests (constituting the load) have different durations and where the requests additionally have redundancy requirements (These extensions are particularly relevant for the problem of finding the replica placement in CDNs -- here different data pieces have different popularities and data redundancy is a common requirement for increasing the availability).

We introduce the following additional notation. A task is an individual request. $J_i = \{ J_{i}(k) \}$ denotes the set of tasks of organization $i$; $p_i(k)$ is the size (processing time) of the task $J_{i}(k)$.

First, let us analyze a problem in which the tasks have no redundancy requirements, i.e. each task has to be processed on exactly one server.

In order to find the optimal solution in this extended model, we start with solving the original problem (as defined in Section~\ref{sec:mathematical-model}) with $n_{i} = \sum_{k} p_i(k)$.
%The resulting $\rho_{ij}$ define, for each piece of data $d_{i}(k)$, the probability of placing $d_i(k)$ at the server $j$. 
% Indeed, the expected number of the access requests redirected from $i$ to $j$ is $r_{ij} = \sum_{k}access(d_{i}(k))\rho_{ij} = \rho_{ij}n_{i}$.
In order to derive the actual distribution of the tasks, we discretize the fractions $\rho_{ij}$ as follows. $i$ should relay to $j$ such subset $S_i(j) \subseteq J_i$ of its own tasks, so that the total error $\Sigma err(S_i(j))$ is minimized: 
\begin{align*}
err(S_i(j)) = |\sum_{k: J_{i}(k) \in S_i(j)} p_i(k) - \rho_{ij}n_{i}| \text{.}
\end{align*}

The rounding problem is the multiple subset problem with different knapsack capacities~\cite{Caprara2000111}. The problem is NP-complete but has a polynomial approximation algorithm.

Now consider a problem in which each organization must execute at least $R$ copies of each task; each copy of the task should be executed at a different location (the execution of the tasks is replicated). This setting models a CDN, but also job processing, where to increase survivability important jobs are replicated on different partions of a datacenter or on different datacenters.

In this extended problem we have to introduce additional constraint on the fractions $\rho_{ij}$ for the original problem (Section~\ref{sec:mathematical-model}): $\forall _{i, j} \: \rho_{ij} \leq \frac{1}{R}$, which guarantees that $R\rho_{ij} \leq 1$. With this constraint we can interpret $R\rho_{ij}$ as the probability of placing a copy of $J_{i}(k)$ at $j$; here the expected number of copies of $J_{i}(k)$ is $\sum_{j}R\rho{ij} = R$.

\vspace{-7pt}
\section{Related work}\label{sec:related-work}
The congestion games \cite{finiteCongestionGames, linearCongestionGames, routingGames, worstCaseEqulibria} define the model for analyzing the selfish behavior of users competing for commonly available resources. Similarly to our model, the cost of a particular resource is linearly proportional to the number of competitors using the resource. In contrast, our model more closely describes the cost of using a resource which depends also on the communication delay.

The assumptions in our model are similar as in the literature on network virualization~\cite{delay-load-server-alloc}. However in network virualization the problems regard locating services which is different from optimizing the quality of serving the common user requests. The complexity of the solutions depend on the number of configurations (which here is unbounded) thus the solutions cannot be applied to our model.

% Scheduling theory also studies related problems. 
The continuos allocation of requests to servers in our model is analogous to the divisible load model~\cite{gallet09divisibleload} with constant-cost communication (a special case of the affine cost model~\cite{beaumont2005scheduling}) and multiple sources (multiple loads to be handled,~\cite{veeravalli2002efficient, drozdowski2008scheduling}). 
The main difference is the optimization goal: makespan is usually optimized in the divisible load model; in contrast, we optimize the average processing time, which, we believe, better models situations in which the load is composed of multiple, small requests issued by various users (the difference is analogous to $C_{\max}$ versus $\Sigma C_i$ debate in the classic multiprocessor job scheduling). The other difference is how the network topology is modelled. The divisible load theory typically studies datacenter-type systems, in which the network topology is known and is a limiting factor, thus the transmissions must be scheduled in a similar way to the computations. %Our model is motivated by Internet-scale distributed systems, therefore we need to consider the latency between processors. In contrast, we do not consider bandwidth as the amount of data to be sent is negligible compared to the computation cost (see also the Appendix for further motivation).

Distributed algorithms for load balancing mostly relay on local optimization techniques (see~\cite{Ackermann:2009:DAQ:1583991.1584046, Berenbrink:2011:DSL:2133036.2133152, Xu:1997:LBP:548748}). One of the most popular techniques is diffusive load balancing, similar in spirit to our distributed algorithm (see~\cite{conf/ipps/AdolphsB12} and the references inside for the current state of the art and \cite{Xu:1997:LBP:548748} for the basic mechanism description). These solutions, however, disregard the geographic distribution of the servers. Our algorithm uses different idea -- the diffusive process is used for calculating the relay fractions instead of for balancing the load. As the result, our local balancing must take into account different latencies between the servers which requires more subtle exchange mechanisms (Algorithms~\ref{alg:exchangingLoads}~and ~\ref{alg:distributedOptimal}).

Our game-theoretic approach is comparable to the selfish job model~\cite{vocking-selfishloadbalancing}: the jobs independently chose the processor on which to execute. While some studies consider mixed case equilibria (making the model continuous similarly to ours), our model considers also communication latency. The common infrastructure models tend to have a low price of anarchy (of order $\log m / \log \log m$~\cite{vocking-selfishloadbalancing}) --- the low price of anarchy in our model extends these results.

Content delivery networks are one of the motivations for our model. 
% The problem of efficient content delivery is important in the current Internet. 
Large companies, like Akamai, specialize in delivering the content of their customers so that the end users experience the best quality of service. Akamai's architecture is based on DNS redirections \cite{draftingBehindAkamai, akamai2, akamaiPatent}. However, the  description of the algorithms optimizing replica placement and request handling are not disclosed. Still, Akamai's infrastructure is owned and controlled by a single entity (Akamai), thus they do not need to solve the game-theoretic equivalent of our model.

CoralCDN \cite{freedman2010experiences} is a p2p CDN consisting of users voluntarily devoting their bandwidth and storage to redistribute the content. In CoralCDN the popular content is replicated among multiple servers (which can be viewed as relaying the requests); the requests for content are relayed only between the servers with constrained pairwise RTTs (which ensures the proximity of delivering server). Our mathematical model formalizes the intuitions behind heuristics in CoralCDN.

%CoralCDN uses a modified DHT as a routing protocol. The modification allows to store the same key at multiple servers. As the result the requests for popular keys are distributed among multiple servers, which protects against servers congestion. Additionally, by using a hierarchical structure of layers, with each layer aggregating servers with constrained pairwise RTTs, the content is served by the servers which are closest to the user. The mechanism of avoiding servers congestion and the mechanism of assuring the proximity of delivering server are, however, completely separate and do not take into account self-dependency of the two problems. Also both mechanisms are based on the threshold values: a server can be either loaded or not; there are only 3 values for the network distances.
 %% (RTT $<$ 20ms, RTT $\in \langle 20ms, 60ms \rangle$, RTT $>$ 60ms). 
%In contrast, our mechanism considers the precise impact of the servers' load and the network delay on the requests processing time. 
%% It is able to precisely indicate whether it is optimal to redirect the request to the remote server or to serve it locally.

\cite{dynamicReplicaPlacement} shows a CDN based on a DHT and heuristic algorithms to minimize the total processing time. Although each server has a fixed constrains on its load/bandwidth/storage capacity, the paper does not consider the relation between server load and its performance degradation. The evaluation is based on simulation; no theoretical results are included.

The problem of mirroring in the Internet is analyzed in \cite{staticReplicaPlacement1, staticReplicaPlacement2}. Both papers show different approaches to choosing locations for replicas so that the average network delay between data locations and end-users is minimized. The impact of servers' congestion is not taken into consideration. 
% The given argument that the number of the servers can be easily increased as the price of the hardware is small comparing to the potential cost of all data relocation is not convincing, especially in case of  small-budget projects.

\vspace{-7pt}
\section{Conclusions}

In this paper we present and analyze a model of a distributed system that minimizes the request processing time by distributing the requests among servers. Existing models assume that the processing time is dominated either by the network communication delay or by congestion of servers. In contrast, \emph{in our model, the observed latency is the sum of the two delays: network delay and congestion}. Our model can be used in different kinds of problems in distributed systems, ranging from routing in content delivery networks to load balancing in a cloud of servers. 

% Finally we shown how the modifications of the original problem can be used for searching such replica placement in content delivery networks, that allows for minimizing the average processing time of the users requests.

We show that the problem of minimizing the total processing time can be stated as an optimization problem in the continuous domain. We prove that the problem is polynomially solvable; but, because of $O(m^6)$ complexity, standard solvers are not practical. We propose a distributed algorithm that, according to our experimental evaluation, even in a network consisting of thousands  of servers requires only a dozen of messages  sent by each server  to converge to a solution worse than at most 0.1\% of the optimum (not counting the gossiping to exchange the information). We show how to estimate the distance between the current solution found by the algorithm and the optimal solution. The estimation is difficult in practice, as it requires solving the subproblem of finding the maximal flow of the minimal cost in a graph. However, the distributed algorithm still outperforms standard optimization techniques. Based on the experiments, we argue that in practice this part of the algorithm can be omitted, as it does not influence the algorithm efficiency.

We also analyze how the lack of coordination influences the total processing time. We give theoretical bounds for the price of anarchy for homogeneous networks and high average loads. Additionally, we assess the price of anarchy experimentally on heterogenous networks. In both cases the price of anarchy is low ($1 + \frac{2cs}{l_{av}}$ in the theoretical analysis, and below $1.15$ in the experiments). 

Our results --- the low price of anarchy and an efficient distributed optimization algorithm --- indicate that a fully distributed query processing system can be efficient. Thus, instead of buying services from dedicated cloud providers or CDN operators, smaller organizations, such as ISPs or universities, can gather in consortia effectively serving the participators' needs.

%In our future work, we plan to consider systems with longer-term load disparities. The assumption for our game-theoretic analysis is that the infrastructure  is shared by all participants on equal terms. The participants can alter the fraction of their request processed by each server, but they cannot isolate their server (so that it does not process foreign requests at all). Thus, although there are short-duration imbalances in the system, servers have similar loads in the long run. When imbalances persist, the underloaded servers could form an isolated sub-system that rejects foreign requests; in consequence, the overall processing time would increase. To prevent fragmentation, the requests originating from these underloaded servers should be preferred by the resource management system. Also we want to investigate the game-theoretical incentives for the users to participate with the load balancing system and to share its infrastructure with others.

\vspace{-7pt}
\bibliographystyle{abbrv}
\bibliography{contentDelivery}

\vspace{-10pt}
\appendix[Removing Negative Cycles]\label{sec:remov-negat-cycl}

The problem of negative cycles removal can be reduced to finding the maximal flow of the minimal cost in a graph. The problem of finding the maximal flow of the minimum cost is well studied in the literature;
%there are many algorithms \cite{networkOptimization, networkFlows}.
In particular the auction algorithms \cite{auctionBasedMinCostFlow}
%, $\epsilon$-relaxation method \cite{optimalFlowEApprox},
and the approximation method for finding minimum circulation  \cite{minimumCirculation} are the examples of the distributed algorithms solving the problem.

For the purpose of proving the reduction we introduce the following notation. $out(\rho', i)$ denotes the total amount of requests that in a partial solution $\rho'$ are relied by a server $i$ to all other servers: $out(\rho', i) = \sum_{j \neq i} r_{ij}$. $in(\rho', i)$ denotes the total amount of requests that in $\rho'$ are relied by all other servers to $i$, $in(\rho', i) = \sum_{j \neq i} r_{ji}$.

%We construct the max-flow min-cost graph as follows.
For each server $i$ we introduce two graph vertices: the front $i_{f}$ and the back $i_{b}$.
There are two additional vertices: $s$ (source) and $t$ (target).
The source $s$ is linked with each front node, $i_{f}$ with an edge $(s, i_f)$ with zero cost and capacity equal to
$out(\rho', i)$. Each back node, $i_b$ is linked with the target $t$ with an edge $(i_b, t)$ with zero cost and capacity equal to $in(\rho', i_{b})$.
%Intuitively, capacity of $(s, i_f)$ equal to $out(\rho', i)$ represents a server $i$ sending out the same load $out(\rho', i)$ as in $\rho'$; conversely, capacity of $(i_b, t)$ equal to $in(\rho', i)$ represents server $i$ accepting the same amount of load as in $\rho'$.

There are also edges between front and back nodes: for each pair $(i_{f}, j_{b})$, $i \neq j$ there is an edge with cost equal to $c_{ij}$ and infinite capacity. 

The maximal flow of the minimal cost $f$ between $s$ and $t$ can be mapped to a new partial solution $\rho''$: a flow  on an edge ($i_{f}$, $j_{b}$) $f_{ij}$ corresponds to server $i$ relying $f_{ij}$ of its own requests to server $j$. 
Observe that, as capacity $(s, i_f) = out(\rho', i)$  and capacity $(i_b, t) = in(\rho', i)$,
the load of $i$-th server in $\rho''$ is equal to its load in $\rho’$.
%Additionally, there are no negative cycles in $\rho’’$: a negative cycle would result in a non-minimal cost of the flow $f$. 

\vspace{-7pt}
\appendix[Validation of the constant latency]

We experimentally verified how the amount of the load sent over the network influences the communication delay between the servers. We randomly selected 60 PlanetLab servers, scattered around Europe, and simulated different intensity of the background load in the following way. Each server choses its 5 neighbors randomly
%, but in a way that each server has exactly 5 neighbors. 
Then the servers start sending data with constant throughput to its 5 neighbors. In different experiments, we used 8 values of the throughputs: 10KB/s, 20KB/s, 50KB/s, 100KB/s, 200KB/s, 500KB/s, 1MB/s, 2MB/s. If a particular throughput was not achievable, the server was just sending data with the maximal achievable throughput. 
%By the nature of experiments on PlanetLab, we were not granted a dedicated access to the machines; thus other experiments running on the same servers added further, unknown network transfers. Additionally, almost all PlanetLab servers do not specify the bandwidth of their Internet connection, nor the historical bandwidth usage.
For each value of the background load we calculated the average round trip time (RTT) between the server and each of its 5 neighbors (we used the average from 300 RTT samples).

Let $rtt(s_i, s_j, t_b)$ denote the average rtt between servers $s_i$ and $s_j$ with the background load generated with throughput $t_b$. For each pair of the servers $s_{i}$ and $s_j$ for which we measured the RTT, and for each value of the background throughput $t_{b}$ we calculated the relative deviation of the average throughput caused by the increase of the background load compared to the minimal throughput 10KB/s: $e(s_i, s_j, b_t) = \frac{rtt(s_i, s_j, t_b) - rtt(s_i, s_j, 10KB/s)}{rtt(s_i, s_j, 10KB/s)}$. For each value of the background throughput, we removed 5\% of the largest deviations and then calculated the mean from deviations $e(s_i, s_j, b_t)$, averaged over all pairs of servers ($\mu$). For each value of the background throughput we additionally calculated the standard deviations ($\sigma$). These results are presented in Table \ref{table:constantRTT}.

\begin{table}[t]
\parbox{.45\linewidth}{
\centering
\begin{tabular}{|c|c|c|}
\cline{1-3}
\multirow{2}{*}{$t_b$} & \multicolumn{2}{c|}{$e(\cdot, \cdot, t_b)$} \\
\cline{2-3}
& $\mu$ & $\sigma$ \\
\cline{1-3}
& & \\
10 KB/s & 0.0 & 0.0 \\
20 KB/s & -0.05 & 0.21 \\
50 KB/s & -0.05 & 0.27 \\
0.1 MB/s & -0.08 & 0.33 \\
\cline{1-3}
\end{tabular}
}
\hfill
\parbox{.45\linewidth}{
\begin{tabular}{|c|c|c|}
\cline{1-3}
\multirow{2}{*}{$t_b$} & \multicolumn{2}{c|}{$e(\cdot, \cdot, t_b)$} \\
\cline{2-3}
& $\mu$ & $\sigma$ \\
\cline{1-3}
& & \\
0.2 MB/s & 0.0 & 0.37 \\
0.5 MB/s & 0.28 & 0.8 \\
2 MB/s & 0.45 & 1.31 \\
5 MB/s & 0.18 & 0.8 \\
\cline{1-3}
\end{tabular}
}
\caption{The relative deviation of the average throughput caused by the increase of the background load (after removal of 5\% largest deviations).}
\label{table:constantRTT}
\vspace{-10pt}
\end{table}

From the data we see that up to $b_t = 0.2$MB/s, which corresponds to the case where each server accepts $5 \cdot 0.2 \cdot 8 = 8$Mb/s of incoming data, the average RTT was not influenced by the background throughput. This is also confirmed by the statistical analysis of the data run for the RTTs (instead of for deviations). For $b_t \leq 0.2$MB/s the ANOVA test (which we run for the whole population -- without removing 5\% of the highest RTTs) confirmed the
lack of dependency
%null hypothesis (that the background throughput does not influence the RTTs) 
for over $56\%$ of the pairs of servers. For $b_t \leq 0.1$MB/s (corresponding to 4Mb/s of incoming throughput) the ANOVA test confirmed null hypothesis for over 70\% of the pairs of servers and for $b_t \leq 50$KB/s
%(corresponding to 2Mb/s of incoming throughput)
for over 90\% of the pairs. We consider that these results strongly justify the assumption of a constant latency in our model.

\end{document}